\newtheorem{lemma}{Lemma}
\newtheorem{theorem}{Theorem}
\theoremstyle{definition}
\newtheorem{definition}{Definition}
\newcommand{\ketbra}[1]{|{#1}\>\mkern-4mu\<{#1}|}
\newcommand{\tr}{\textup{Tr}}
\renewcommand{\>}{\rangle}
\newcommand{\<}{\langle}
\newcommand{\N}{{\mathbb{N}}} %
\newcommand{\E}{{\mathbb{E}}}
\newcommand{\llangle}{{\<\mkern-4mu\<}}
\newcommand{\rrangle}{{\>\mkern-4mu\>}}
\newcommand{\CQT}{Centre for Quantum Technologies, National University of Singapore, 3 Science Drive 2, Singapore 117543.\looseness=-1}
\newcommand{\NTU}{Nanyang Quantum Hub, School of Physical and Mathematical Sciences, Nanyang Technological University, Singapore 639673.\looseness=-1}
\newcommand{\IHPC}{A*STAR Quantum Innovation Centre (Q.InC), Institute of High Performance Computing (IHPC), Agency for Science, Technology and Research (A*STAR), 1 Fusionopolis Way, \#16-16 Connexis, Singapore, 138632, Republic of Singapore.\looseness=-1}
\newcommand{\CQuERE}{Centre for Quantum Engineering, Research and Education, TCG CREST, Sector V, Salt Lake, Kolkata 700091, India.\looseness=-1}
\begin{document}

\normalem
\newlength\figHeight 
\newlength\figWidth 

\title{Classically Spoofing System Linear Cross Entropy Score Benchmarking for Sublinear-Depth Circuit}
\author{Andrew Tanggara}
\email{andrew.tanggara@gmail.com}
\affiliation{\CQT}
\affiliation{\NTU}

\author{Mile Gu}
\email{mgu@quantumcomplexity.org}
\affiliation{\NTU}
\affiliation{\CQT}

\author{Kishor Bharti}
\email{kishor.bharti1@gmail.com}
\affiliation{\IHPC}
\affiliation{\CQuERE}

\begin{abstract}

In recent years, several experimental groups have claimed demonstrations of ``quantum supremacy'' or computational quantum advantage. 
A notable first claim by Google Quantum AI revolves around a metric called the Linear Cross Entropy Benchmarking (Linear XEB), which has been used in many quantum supremacy experiments since.
The complexity-theoretic hardness of spoofing Linear XEB, however, depends on the Cross-Entropy Quantum Threshold (XQUATH) conjecture put forth by Aaronson and Gunn, which has been disproven for sublinear depth circuits.
In the efforts on demonstrating quantum supremacy by quantum Hamiltonian simulation, a similar benchmarking metric called the System Linear Cross Entropy Score (sXES) holds firm in light of the aforementioned negative result due to its fundamental distinction with Linear XEB.
Moreover, the complexity-theoretic hardness of spoofing sXES rests on the System Linear Cross-Entropy Quantum Threshold Assumption (sXQUATH), the formal relationship of which to XQUATH is unclear.
Despite the promises offered by sXES for future demonstration of quantum supremacy, in this work we show that it can be classically simulated efficiently in certain regimes.
Particularly, we show that sXQUATH does not hold for sufficiently noisy sublinear depth circuits by constructing a classical algorithm that spoofs sXES whenever the circuit noise rate is larger than certain threshold.
\end{abstract}

\maketitle

\section{Introduction}

In $2019$, the Google quantum AI team claimed the first experimental demonstration of ``quantum supremacy,''~\cite{preskill2012quantum} or computational quantum advantage using a $53$ qubits superconducting circuit~\cite{Arute_2019}, signifying a major leap forward in practical quantum computing and challenging the extended Church-Turing thesis~\cite{arora2009computational}.
In verifying that their circuit is correctly performing a task called quantum random circuit sampling (RCS), they tested their samples using a metric called ``Linear Cross-Entropy Benchmarking (Linear XEB)''~\cite{aaronson2016complexitytheoretic,Boixo_2018,neill2018blueprint,hangleiter2019sample,eisert2020quantum,hangleiter2023computational}.
Since then, multiple RCS experiments~\cite{wu2021strong,zhu2022quantum,madsen2022quantum,morvan2023phase} have their quantum supremacy claims verified by Linear XEB method, or a variant thereof.
Skepticism on these claims have been raised, particularly by classical simulations of Google's RCS experiment~\cite{pan2022simulation,huang2020classical,liu2021closing,pan2022solving,kalachev2021classical} showing significantly shorter classical runtime compared to their initial estimation of $10,000$ years.
Moreover, theoretical results on  classical simulation of different RCS variants~\cite{gao2018efficient,barak2020spoofing,gao2021limitations,oh2023spoofing,aharonov2023polynomial,rajakumar2024polynomial} cast doubts on the complexity-theoretic hardness of spoofing Linear XEB that rests on the cross entropy quantum threshold assumption (XQUATH) conjecture proposed by Aaronson and Gunn~\cite{aaronson2020classical}.
Recent demonstration that XQUATH does not hold for sublinear depth RCS~\cite{aharonov2023polynomial}, further diminishes the legitimacy of Linear XEB as a benchmark for quantum supremacy.

A variant of Linear XEB called the System Linear Cross Entropy Score (sXES) benchmarking also aims to demonstrate quantum supremacy using a more structured family of quantum circuits called the Minimal Quantum Singular Value Transform (mQSVT) circuits, which can implement Quantum Singular Value Transform (QSVT) algorithms~\cite{gilyen2019quantum} (such as Szegedy quantum walk~\cite{szegedy2004quantum} and quantum solver for system of linear equations~\cite{Harrow_2009} and Hamiltonian simulation tasks~\cite{feynman2018simulating,lloyd1996universal}).
Structural difference between the sampling task assessed by sXES and other Linear XEB variants renders it unclear whether existing Linear XEB spoofing methods such as~\cite{gao2018efficient,gao2021limitations,aharonov2023polynomial} can be used for sXES.
In particular, it is unclear whether the ``Pauli Path'' algorithm that was shown to be able to efficiently approximate RCS output probabilities to refute XQUATH~\cite{aharonov2023polynomial} can be directly used to refute sXQUATH due to this structural difference\footnote{Essentially, the Pauli path approximation in Ref.~\cite{aharonov2023polynomial} was shown to be sufficiently close to the RCS probabilities by an analysis of the first and second moment of Haar-random expectations. 
The same analysis does not work for mQSVT circuits due to the use of multiple copies of random gates in the circuit.
Particularly, the ``orthogonality property'' central in the analysis of RCS Pauli path does not hold for mQSVT Pauli path.
Additionally, how one is required to approximate \textit{all} output $n$ bit strings except for the all zeros string $0^n$ in sXQUATH and sXES complicates the analysis even more. }.
Moreover the hardness of spoofing sXES lies upon a complexity-theoretic conjecture known as the System Linear Cross-Entropy Quantum Threshold Assumption (sXQUATH), the formal relationship  of which to XQUATH is unknown (see Appendix~\ref{App:thershold_assumptions}).
These fundamental distinctions from other Linear XEB variants thus renders sXES a promising verification method in future claims of quantum supremacy experiments.

In this work, we go beyond the technique in~\cite{aharonov2023polynomial} to show that there exists an efficient classical algorithm that approximates the experiment sufficiently well to refute sXQUATH (see Theorem~\ref{thm:sxquath_false}).
At the same time, we also show explicitly that our algorithm spoofs the sXES benchmark (see Theorem~\ref{thm:spoofing_sXES}) for noisy experiments.
Our algorithm approximates the output probability distribution of a mQSVT circuit which components are randomly sampled, which in turn gives a sufficiently high sXES.
While the mQSVT circuits offers a promising way to demonstrate quantum supremacy in the near term, our results suggest that a more robust benchmarking method is necessary.

\section{Spoofing mQSVT circuit benchmarking}

An mQSVT circuit $\mathtt{mQSVT}(U)$ (see Fig.~\ref{fig:mqsvt_circuit}) consists of $d$ ``blocks'', each containing a copy of $n+1$ qubit unitary $U$ and a copy of its conjugate $U^\dag$.
Denote the depth of $U$ as $d_U$ so that we can write $U=U_{d_U}\dots U_1$ where $U_j$ is the $j$-th layer of $U$.
These unitaries are interleaved by phase shift gates $R(\varphi)$ at the top register with carefully chosen phases (as discussed in the supplementary material of~\cite{Dong_2022}).
Samples from an mQSVT circuit are obtained from measuring the bottom $n$ registers conditioned on measurement of the top register being $0$.
The outcome probability of an $n$-bit string $x$ is therefore $p(U,x) = |\<0x|\mathtt{mQSVT}(U)|0^{n+1}\>|^2$ (for an explicit expression of the output probability of an mQSVT circuit, see Appendix~\ref{App:mqsvt_benchmark}.
Here we consider unitaries $U$ consisting only of two-qubit gates such that in each layer, every qubit register is evolved by precisely one two-qubit gate without any geometric locality assumption (hence $n+1$ is even).
This is the unitary architecture assumed for the RCS simulation result in~\cite{aharonov2023polynomial}.

\begin{figure*}
    \includegraphics[width=\linewidth]{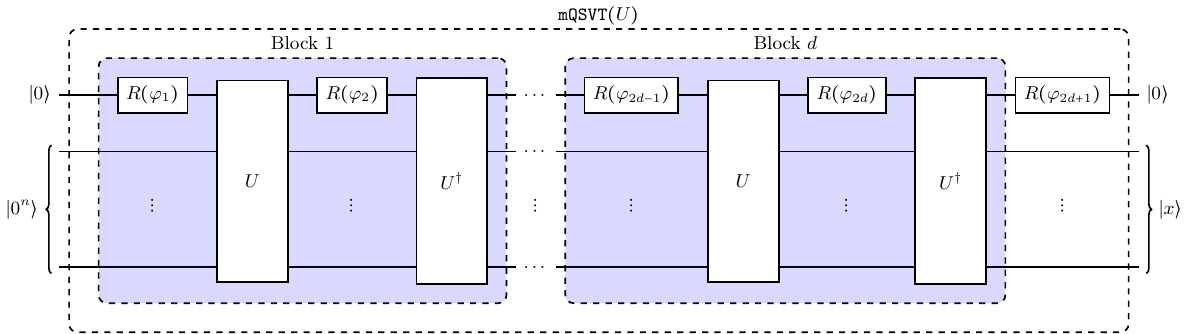}
    \caption{mQSVT circuit $\mathtt{mQSVT}(U)$ where $U$ is a random $n+1$ qubit unitary made out of 2 qubit Haar-random unitaries and $R(\varphi)$ is $Z$-rotation gate with angle $\varphi$.}
    \label{fig:mqsvt_circuit}
\end{figure*}

For a noisy mQSVT Hamiltonian simulation experiment, a benchmarking scheme similar to XEB benchmarking used in RCS experiment~\cite{Arute_2019,wu2021strong,zhu2022quantum}, called the average system linear cross-entropy score (sXES)~\cite{Dong_2022} was proposed.
For a given ideal mQSVT circuit $\mathtt{mQSVT}(U)$ and (empirically approximated) experimental probability $p_\mathrm{exp}(U,x)$ of output $x$, its sXES is given by
\begin{gather}\label{eqn:sxes_main}
    \E_U[\mathrm{sXES}(U)] = \sum_{x\neq 0^n} \E_U[p(U,x) \, p_\mathrm{exp}(U,x)] \,,
\end{gather}
where $\E_U$ is expectation over random $U$.
An experiment with high sXES indicates a high circuit fidelity as it assigns a high probability $p_\mathrm{exp}(U,x)$ to string $x$ with high ideal probability $p(U,x)$, hence higher sXES.

Computational hardness of classically spoofing sXES can be reduced to the system linear cross-entropy heavy output generation (sXHOG) sampling problem~\cite{Dong_2022}.
However, the hardness of classically solving sXHOG (with a high probability and a suitable choice of number of samples) holds only under a conjecture called the System linear cross-entropy quantum threshold assumption (sXQUATH)~\cite{Dong_2022}.
sXQUATH conjecture states that there is no polynomial-time classical algorithm taking an efficient description of $n+1$-qubit unitary $U$ and $n$ bit string $x$ as inputs and outputs an approximation $q(U,x)$ of mQSVT output probability $p(U,x)$ such that
\begin{equation}\label{eqn:sxquath_informal}
\begin{aligned}
    \textup{sXQ} &:= \mathcal{E}\Big(p(U,X),\frac{1}{2^n}\Big) -  \mathcal{E}\Big(p(U,X),q(U,X)\Big) \geq c 2^{-3n}
\end{aligned}
\end{equation}
for some constant $c$ and large enough $n$.
Here, $\mathcal{E}(f,g) := \E_{U,X}[(f(U,X)-g(U,X))^2]$ is the mean-squared error (MSE) between functions $f$ and $g$ and $\E_{U,X}$ is expectation over uniformly random variable $X\in\{0,1\}^n\backslash\{0^n\}$ and over Haar-random two-qubit gates in $n+1$-qubit unitary $U$. 
Now we present our main result below in Theorem~\ref{thm:sxquath_false}, which states that sXQUATH generally does not hold. 
Particularly for a single-block mQSVT circuit (i.e. with $d=1$) and a sublinear depth unitary $U$, one can construct a classical algorithm running in time polynomial in $n$ with a non-negligibly less MSE than the trivial approximation. 

\begin{theorem}\label{thm:sxquath_false}
    There exists an efficient classical algorithm taking an efficient description of $n+1$-qubit unitary $U$ with sublinear depth $d_U=o(n)$ and $n$ bit string $x$ as inputs and outputs an approximation $q(U,x)$ of a single-block mQSVT circuit output probability $p(U,x)$ that satisfies eqn.~\eqref{eqn:sxquath_informal}.
\end{theorem}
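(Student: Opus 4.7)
\emph{Proof plan.} My plan is to adapt the Pauli-path integral spoofing technique of Aharonov, Gao, Landau, Liu and Vazirani~\cite{aharonov2023polynomial} from random circuit sampling to the mQSVT circuit. First I would expand the single-block output probability as a Pauli-path sum
\[
p(U,x) \;=\; \sum_{\gamma} \alpha_\gamma(U,x),
\]
where a path $\gamma=(P_0,P_1,\ldots,P_{2d_U+1})$ inserts a Pauli operator between every two-qubit-gate layer of $U$, the middle rotation $R(\varphi)$, and the mirrored layers of $U^\dagger$. The coefficient $\alpha_\gamma$ is a product of per-layer transition weights together with boundary factors from $|0^{n+1}\>\<0^{n+1}|$ and the post-selecting projector $|0\>\<0|\otimes|x\>\<x|$. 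The diagonal phase gate $R(\varphi)$ only mixes $\{I,Z\}$ on the top qubit and folds neatly into an adjacent layer weight without changing the path combinatorics.

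The classical algorithm then outputs the truncation
\[
q(U,x) \;:=\; \sum_{\gamma\,:\,|\gamma|\le \ell}\alpha_\gamma(U,x)
\]
retaining only paths whose total Hamming weight across all layers is at most $\ell=O(d_U)=o(n)$. Since $d_U$ is sublinear the number of surviving paths is $\mathrm{poly}(n)$, and each $\alpha_\gamma$ can be evaluated in polynomial time by contracting only within its lightcone. The all-identity path contributes exactly $1/2^n$, so $q$ reduces to the baseline when $\ell=0$ and any advantage over the trivial approximation is driven by the remaining light, non-identity paths.

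The MSE gap can then be rearranged as
\[
\mathcal{E}\bigl(p,\tfrac{1}{2^n}\bigr) - \mathcal{E}(p,q) \;=\; \E_{U,X}\!\left[\bigl(q-\tfrac{1}{2^n}\bigr)\bigl(2p-q-\tfrac{1}{2^n}\bigr)\right],
\]
and I would evaluate both factors by gate-wise Weingarten calculus: non-trivial path pairs $(\gamma,\gamma')$ survive the Haar average only when their Paulis are compatible at every two-qubit tile, collapsing the double sum to a small set of ``matched'' contributions. A direct count of those surviving terms should yield the desired $\Omega(2^{-3n})$ separation, with one factor of $2^{-n}$ arising from path normalization, one from the Haar averages on the surviving tiles, and one from the uniform marginal over $X$.

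I expect the principal obstacle to come from the mirrored structure of the mQSVT circuit: unlike in RCS, each Haar-random two-qubit gate appears both in $U$ and again as its dagger in $U^\dagger$, so the forward and backward halves of every Pauli path are correlated and must be Weingarten-averaged jointly at each tile rather than independently. The interleaved phase gate $R(\varphi)$ and the post-selection on the ancilla being $|0\>$ further complicate this tile-level accounting; I would absorb the post-selection into a rank-one boundary operator and verify that the specific choice of $\varphi$ prescribed by the mQSVT construction does not cause destructive interference among the dominant light paths. Showing that, after all these corrections, the surviving light-path contributions still exceed $c\cdot 2^{-3n}$---rather than cancelling below the sXQUATH threshold---is the technical heart of the argument.
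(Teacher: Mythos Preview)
Your high-level instincts are right: Pauli-path expansion plus gate-wise Weingarten averaging is the correct toolkit, and you have correctly spotted that the mirrored $U,U^\dagger$ structure forces a joint average at each two-qubit tile. But the proposal has a real gap in how it handles the consequences of that joint average.

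\textbf{The ``matching'' picture does not survive the fourth moment.} In your plan you write that ``non-trivial path pairs $(\gamma,\gamma')$ survive the Haar average only when their Paulis are compatible at every two-qubit tile, collapsing the double sum to a small set of matched contributions.'' That is the second-moment orthogonality that drives the RCS argument. Here, because each gate $V_{j,i}$ already appears twice inside a \emph{single} Fourier coefficient (once in $U$, once in $U^\dagger$), evaluating $\E_U[F_\gamma F_{\mathbf{s}}]$ puts \emph{four} copies of $V_{j,i}$ under the Haar expectation. The resulting fourth-moment tile values $G(r,\tilde r,s,\tilde s;r',\tilde r',s',\tilde s')$ are nonzero for many Pauli tuples that are not ``matched'' in any simple sense, and several of them are \emph{negative} (for instance $G(ZZ/2,ZI/2,PX/2,PY/2;ZI/2,ZI/2,ZI/2,ZI/2)<0$ and $G(ZZ/2,ZI/2,ZZ/2,ZI/2;ZI/2,ZI/2,s',\tilde s')<0$ for $s'=\tilde s'\notin\{ZI,II\}$). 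So the double sum does not collapse to a small positive set; a direct ``count survivors and sum'' will not produce the required sign, let alone the $\Omega(2^{-3n})$ magnitude.

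\textbf{A generic weight-truncation will not give the needed cancellations.} Expanding your MSE identity gives
\[
\textup{sXQ}=\sum_{\gamma}\E[F_\gamma^2]+2\sum_{\gamma}\sum_{\mathbf{s}\neq\gamma}\E[F_\gamma F_{\mathbf{s}}]-\sum_{\gamma\neq\gamma'}\E[F_\gamma F_{\gamma'}] - \frac{2}{2^n}\sum_\gamma \E[F_\gamma],
\]
where $\gamma,\gamma'$ range over your retained light paths and $\mathbf{s}$ over \emph{all} paths. With a weight cutoff $\ell=O(d_U)$ you keep many $\gamma$'s, and you must control (a) the light--heavy cross terms in the second sum, (b) the light--light cross terms in the third, and (c) the second-moment expectations $\E[F_\gamma]$ in the last term, all of which are generically nonzero because second-moment orthogonality (eqn.~\eqref{eqn:orthogonality_fourier_rcs}) fails for mQSVT. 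Your proposal offers no mechanism to show that these signed contributions do not wipe out the diagonal $\sum_\gamma\E[F_\gamma^2]$.

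\textbf{What the paper actually does.} Rather than a weight cutoff, the paper keeps exactly \emph{one} hand-designed path $\mathbf{r}=(Z\otimes I^{l-1}\otimes Z\otimes I^{n-l},\,Z\otimes I^{n},\dots,Z\otimes I^{n})$, where the extra $Z$ in the first Pauli sits on the qubit coupled to qubit~1 in layer~$U_1$. This single choice is engineered so that: (i) the second-moment term $\E_U[F(U,\mathbf{r},x)]$ vanishes exactly, because the tile value $G(ZZ,ZI;ZI,ZI)=0$; (ii) the diagonal term $\E_U[F(U,\mathbf{r},x)^2]$ can be computed from a short list of explicit fourth-moment values and is $\ge 2^{-2n}\alpha^{d_U}$; and (iii) the cross term $\sum_{\mathbf{s}\neq\mathbf{r}}\E_U[F_{\mathbf{s}}F_{\mathbf{r}}]$ is reduced, via the restriction that the $\mathbf{r}$-side Paulis at every tile are in $\{ZZ,ZI,II\}$, to a small table of fourth-moment values (eqns.~\eqref{eqn:haar_4_moment_ZI}--\eqref{eqn:haar_4_moment_ZZ_ZI}) whose negative entries are explicitly dominated after summing over $x\neq 0^n$. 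The surgical choice of $\mathbf{r}$ is what makes the sign analysis tractable; a weight-$\ell$ truncation does none of these things automatically.

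In short: keep your Pauli-path and Weingarten framework, but replace the weight cutoff by a single explicit path chosen to kill the second-moment term and to confine the fourth-moment cross terms to a finite, computable list. That is where the actual work---and the actual proof---lives.
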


We further show that our algorithm spoofs mQSVT circuits with sufficiently large noise.
As mQSVT circuits with completely depolarizing noise have uniform output probability, its sXES is $\E_U[\mathrm{sXES}(U)] = 2^{-n} \sum_{x\neq0} \E_U[p(U,x)]$, indicating no correlation with the ideal probability $p(U,x)$.
Our algorithm spoofs sXES of all mQSVT circuits with depolarizing noise above a certain threshold (such that it is close to $2^{-n} \sum_{x\neq0} \E_U[p(U,x)]$).
\begin{theorem}\label{thm:spoofing_sXES}
    There exists an efficient classical algorithm spoofing sXES for all noisy single-block mQSVT circuit with $n+1$-qubit unitary $U$ such that its sXES is at most $2^{-n}(\sum_{x\neq0} \E_U[p(U,x)] + c^{d_U})$ for some constant $c>0$.
\end{theorem}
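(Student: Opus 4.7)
The plan is to build a classical sampler from the function $q(U,x)$ produced by the algorithm of Theorem~\ref{thm:sxquath_false}, and show that its sXES exceeds the claimed noisy benchmark. The core observation is that the MSE advantage of $q$ over the uniform estimator $1/2^n$ translates directly into a correlation advantage between $q$ and the ideal probability $p$.

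The first step is algebraic. Using the identity
\begin{equation*}
(p-2^{-n})^2 - (p-q)^2 \;=\; (q-2^{-n})\bigl(2p - 2^{-n} - q\bigr),
\end{equation*}
taking expectations and invoking Theorem~\ref{thm:sxquath_false} yields
\begin{equation*}
\E_{U,X}[p\,q] \;\geq\; 2^{-n}\E_{U,X}[p] \,+\, \tfrac{c}{2}\,2^{-3n} \,+\, \tfrac{1}{2}\bigl(\E_{U,X}[q^2] - 2^{-2n}\bigr).
\end{equation*}
Provided the final term is non-negative (which holds whenever the algorithm's output $q$ has mean at least $2^{-n}$, a property inherited from the Pauli-path construction underlying Theorem~\ref{thm:sxquath_false}), multiplying through by $2^n - 1$ converts the bound into
\begin{equation*}
\sum_{x \neq 0^n} \E_U[p\,q] \;\geq\; 2^{-n}\sum_{x \neq 0^n}\E_U[p] + \Omega(2^{-2n}).
\end{equation*}
Thus, treating $q$ as the classical ``experimental distribution'' beats the uniform baseline by $\Omega(2^{-2n})$.

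The second step is to convert $q$ into a genuine probability distribution $p_\text{cl}$ by clipping negative entries to zero and rerouting any missing mass to the post-selection failure outcome, so that $p_\text{cl}(U,x) \leq \max(q(U,x),0)$ for $x \neq 0^n$. A Cauchy--Schwarz bound on $\E_{U,X}|p_\text{cl} - q|$, controlled by the square root of the MSE from Theorem~\ref{thm:sxquath_false}, shows the resulting perturbation of the score is of lower order than the $\Omega(2^{-2n})$ advantage. The classical sampler therefore achieves sXES at least $2^{-n}\sum_{x\neq 0}\E_U[p] + \Omega(2^{-2n})$, which dominates the noisy sXES $2^{-n}\bigl(\sum_{x\neq 0}\E_U[p] + c^{d_U}\bigr)$ as long as $c>0$ is chosen so that $c^{d_U}$ is absorbed by the correlation advantage, yielding the claimed spoofing regime.

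The main obstacle will be making the normalization step fully rigorous: because the per-bitstring MSE advantage is only of order $2^{-3n}$, a careless conversion of $q$ into a probability distribution could easily wipe out the entire correlation bonus. Controlling this tightly will require leveraging structural features of the Pauli-path-truncated $q$ (its polynomial support and the concentration of its total mass near one) beyond the black-box MSE bound supplied by Theorem~\ref{thm:sxquath_false}.
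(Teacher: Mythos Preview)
Your approach has a genuine quantitative gap, and it also diverges from the paper in a way worth noting.

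\textbf{The gap.} Invoking Theorem~\ref{thm:sxquath_false} as a black box only gives you the MSE advantage $\mathcal{E}(p,2^{-n}) - \mathcal{E}(p,q) \geq c\,2^{-3n}$ asserted in eqn.~\eqref{eqn:sxquath_informal}. Feeding that into your identity and multiplying by $2^n-1$ produces a correlation advantage of order $2^{-2n}$, exactly as you write. But Theorem~\ref{thm:spoofing_sXES} asks you to beat the noisy benchmark by $c^{d_U}/2^n$ for a \emph{fixed} constant $c\in(0,1)$ and \emph{arbitrary} $d_U$. When $d_U$ is, say, a constant, $c^{d_U}/2^n$ is of order $2^{-n}$, which your $\Omega(2^{-2n})$ cannot match. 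Your sentence ``as long as $c>0$ is chosen so that $c^{d_U}$ is absorbed by the correlation advantage'' quietly asks $c$ to depend on $n$, which the theorem does not allow. The black-box statement of Theorem~\ref{thm:sxquath_false} is simply too weak here; you need the sharper intermediate estimate from its proof, namely $\sum_{x\neq 0^n}\sum_{\mathbf{s}\neq\mathbf{r}}\E_U[F(U,\mathbf{s},x)F(U,\mathbf{r},x)] \geq 2^{-n}\beta^{d_U}$ (or equivalently, keep and evaluate your discarded term $\tfrac12(\E[q^2]-2^{-2n}) = \tfrac12\E_{U,X}[F(U,\mathbf{r},X)^2]$, which is itself of order $\alpha^{d_U}/2^{2n}$ and survives the $(2^n-1)$ multiplication with the right scaling).

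\textbf{How the paper proceeds.} The paper skips both your algebraic detour and the normalization step entirely. It plugs $q(U,x)=2^{-n}+F(U,\mathbf{r},x)$ directly into the sXES formula as $p_{\mathrm{exp}}$, expands $p(U,x)=\sum_{\mathbf{s}}F(U,\mathbf{s},x)$, and observes that $\mathrm{sXES} = 2^{-n}\sum_{x\neq 0}\E_U[p(U,x)] + \sum_{x\neq 0}\sum_{\mathbf{s}}\E_U[F(U,\mathbf{s},x)F(U,\mathbf{r},x)]$. The second sum is then lower-bounded by reusing the already-established bound on term $(iii)$ from eqn.~\eqref{eqn:sxquath_simplified_main}, giving $\geq 2^{-n}c^{d_U}$ directly. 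No clipping of $q$ to a probability distribution is performed; the paper treats sXES purely as a score. So the ``main obstacle'' you anticipate is simply absent in the paper's framing, and the heavy lifting is the term-$(iii)$ bound rather than any consequence of the sXQ inequality.
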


If we consider mQSVT circuit corrupted with depolarizing noise with noise strength $\gamma\in[0,1]$ on each of its register in each layer, then for sufficiently large $\gamma$ its sXES score is going to be less than $2^{-n}(\sum_{x\neq0} \E_U[p(U,x)] + c^{d_U})$.
Theorem~\ref{thm:spoofing_sXES} indicates that the sXES all such noisy mQSVT circuit is spoofable by our algorithm.

Classical algorithm that refutes sXQUATH and spoofs sXES benchmark in Theorem~\ref{thm:sxquath_false} and Theorem~\ref{thm:spoofing_sXES} above is from a family of algorithms called the Pauli path algorithms.
The proof (sketch) of Theorem~\ref{thm:sxquath_false} and Theorem~\ref{thm:spoofing_sXES} will be given in Section~\ref{sec:refuting_sxquath} and Section~\ref{sec:spoofing_sxes} after we lay out the Pauli path framework below.
We will also discuss how the existing instances of Pauli path algorithm used to classically simulate quantum circuits~\cite{bremner2017achieving,gao2018efficient,aharonov2023polynomial} are not directly applicable to mQSVT circuit, mainly due to the existence of multiple copies of random unitaries.

\section{Pauli path algorithm}

Given an $n$-qubit unitary quantum circuit $C=C_d C_{d-1} \dots C_1$ (where $C_j$ is the $j$-th layer of circuit $C$) with a fixed input $|0^n\>$ and a computational basis measurement, Pauli path algorithm classically computes the output probabilities by expanding density matrices at every layer in terms of normalized $n$-qubit Pauli matrices $\mathcal{P}_n = \{I/\sqrt{2},X/\sqrt{2},Y/\sqrt{2},Z/\sqrt{2}\}^{\otimes n}$.
At the input, we get $\ketbra{0^n} = \sum_{s_1\in\mathcal{P}_n} s_1 \tr(s_1\ketbra{0^n})$.
We can substitute this expansion to the density matrix after the first layer, $\rho_1:=C_1\ketbra{0^n}C_1^\dag$ then expand it in the same manner to get $\rho_1=\sum_{s_1,s_2\in\mathcal{P}_n} s_2\tr(s_2 C_1 s_1 C_1^\dag)\tr(s_1\ketbra{0^n})$.
Repeating this for the remaining layers and for measurement $\ketbra{x}$, we obtain \textit{transition amplitudes} in the coefficients of the expansion that correspond to each step in the circuit.
Transition amplitude at the input and the measurement steps are defined by $\llangle s_1|0^n\rrangle := \tr(s_1\ketbra{0^n})$ and $\llangle x|s_{d+1}\rrangle := \tr(\ketbra{x}s_{d+1})$, respectively.
Whereas the transition amplitude corresponding to the $j$-th layer is $\llangle s_{j+1}|C_j|s_j \rrangle := \tr(s_{j+1}C_js_jC_j^\dag)$, where we call $s_j$ an \textit{input Pauli} and $s_{j+1}$ an \textit{output Pauli}.
A sequence of normalized $n$-qubit Paulis $\mathbf{s}=s_1,s_2,\dots,s_{d+1}$ is called a \textit{Pauli path}.
Then, the probability of obtaining a measurement outcome $n$ bit string $x$ is
\begin{gather}\label{eqn:rcs_output_proba_pauli_path}
    |\<x|C|0^n\>|^2 = \sum_{s_1,s_2,\dots,s_{d+1}\in\mathcal{P}_n} f(C,\mathbf{s},x) \;,
\end{gather}
where each Pauli path defines a \textit{Fourier coefficient} 
\begin{equation}
    f(C,\mathbf{s},x) = \llangle x|s_{d+1}\rrangle  \llangle s_{d+1}|C_d|s_d\rrangle \dots \llangle s_2|C_1|s_1\rrangle  \llangle s_1|0^n \rrangle \;.
\end{equation}

An important property of the Pauli path algorithm in approximating RCS (with the circuit architecture as mQSVT circuit unitary explained earlier) is the orthogonality property~\cite{harrow2009random,aharonov2023polynomial} which states that
\begin{equation}\label{eqn:orthogonality_fourier_rcs}
\begin{gathered}
    \E_C\Big[f(C,\mathbf{s},x) f(C,\mathbf{r},x)\Big] = 0 \;,
\end{gathered}
\end{equation}
for Pauli paths $\mathbf{s}\neq\mathbf{r}$ and $\E_C$ expectation over random circuit $C$.
This is due to independently sampled two-qubit gates in $C$ that decomposes the expectation over circuit $C$ as a product of Haar-random expectations over these two-qubit gates.
As each two-qubit gate $V$ appears twice, once in $f(C,\mathbf{s},x)$ and once in $f(C,\mathbf{r},x)$, we get expressions of the form $\tr( \E_V[V^{\otimes2}s\otimes r (V^\dag)^{\otimes2}] s'\otimes r' )$ which evaluates to 0 when $s\neq r$ or $s'\neq r'$.

Unfortunately, orthogonality condition~\eqref{eqn:orthogonality_fourier_rcs} does not hold in general for mQSVT circuits due to a random unitary $U$ appearing $2d$ number of times in a $d$-block mQSVT circuit (see 
Appendix~\ref{App:single_block_mQSVT_pauli_path} for the case of $d=1$).
This requires the more complicated analysis of higher moment two-qubit Haar-random expectation, which unlike the second moment, is negative for some choices of Paulis.
Now we define an mQSVT Pauli path $\mathbf{s} = \mathbf{s}^{(1)},\mathbf{s}^{(2)},\dots, \mathbf{s}^{(d)}$ is a sequence of sub-paths $\mathbf{s}^{(k)} = s_1^{(k)},\dots,s_{d_U+1}^{(k)},\Tilde{s}_{d_U+1}^{(k)},\dots,\Tilde{s}_1^{(k)}$ that goes through unitary $U$ and $U^\dag$ in the $k$-th block.
So, $s_j,s_{j+1}$ are the input and output Paulis for layer $U_j$, respectively. 
Whereas $\Tilde{s}_{j+1},\Tilde{s}_j$ are input and output Paulis for $U_j^\dag$.
Since phases $\varphi_1,\dots,\varphi_{2d+1}$ in an mQSVT circuit are fixed, we can instead simply absorb rotation gates $R(\varphi_1)$ and $R(\varphi_{2d+1})$ at the beginning and end of the circuit into the preparation and measurement, respectively.
So for a Pauli path $\mathbf{s}$ through a mQSVT circuit with unitary $U$, we can define an mQSVT circuit Fourier coefficient as 
\begin{equation}
    F(U,\mathbf{s},x) = \llangle0x|\Tilde{s}_1^{(d)}\rrangle \prod_{k=1}^d F_k(U,\mathbf{s}^{(k)}) \llangle s_1^{(1)}|0^{n+1}\rrangle \;,
\end{equation}
where $F_k(U,\mathbf{s}^{(k)})$ is the $k$-th \textit{block Fourier coefficient} that contains the product of transition amplitudes of each layer of $U$ and $U^\dag$ and rotation gates $R(\varphi_{2k})$ and $R(\varphi_{2k-1})$ (for an explicit expression, see 
Appendix~\ref{App:mQSVT_pauli_path}).
Hence the probability of output $x$ from an mQSVT circuit with unitary $U$ is 
\begin{equation}\label{eqn:fourier_expansion_mqsvt}
    p(U,x) = \sum_{\mathbf{s}} F(U,\mathbf{s},x) \;.
\end{equation}

\subsection{Pauli Path Simulation for mQSVT Circuit}

As we have mentioned, both in showing that sXQUATH does not hold (Theorem~\ref{thm:sxquath_false}) and in spoofing sXES (Theorem~\ref{thm:spoofing_sXES}) for single-block mQSVT circuit, we use the Pauli path algorithm to approximate the output probabilities of an mQSVT circuit.
Particularly for a given unitary $U$ and outcome $x$, we approximate the mQSVT output probability by
\begin{equation}\label{eqn:pauli_path_for_spoofing}
    q(U,x) = \frac{1}{2^n} + F(U,\mathbf{r},x)
\end{equation}
for a Pauli path $\mathbf{r} = (Z\otimes I^{\otimes l-1} \otimes Z \otimes I^{\otimes n-l} , Z\otimes I^{\otimes n} , \dots , Z\otimes I^{\otimes n})$, which can be computed in time polynomial in $n$.
From the second layer onwards, all Paulis in path $\mathbf{r}$ are identity except for the Pauli $Z$ at the first register.
However, $r_1=Z\otimes I^{\otimes l-1} \otimes Z \otimes I^{\otimes n-l}$ depends on the architecture in the first layer of $U$.
Namely we set $l\in\{2,\dots,n+1\}$ such that the first qubit is coupled with the $l+1$-th qubit in the first layer $U_1$ by two-qubit gate $V_1$.
Thus the input Paulis for this two-qubit gate $V_1$ is $Z\otimes Z$.
For example, if in the two-qubit gate $V_1$ that operates on the first qubit couples it with the 5th qubit then $r_1= Z\otimes I^{\otimes3} \otimes Z \otimes I^{\otimes n-4}$.
We also denote the two-qubit gate in the $j$-th layer of $U$ that couples the first register with some other register as $V_j$.
Proof outline of how this Pauli path approximation refutes sXQUATH and spoofs sXES for single-block mQSVT circuit are given in the methods section, while a full technical proof can be found in Appendix~\ref{App:sxquath_false} and Appendix~\ref{App:spoofing_sXES}.
Here we discuss the intuition on why this Pauli path simulation is a sufficiently good approximation to obtain the aforementioned results.

By using approximation $q(U,x)$ defined in eqn.~\eqref{eqn:pauli_path_for_spoofing}, we found that the expected approximation error $\mathcal{E}(p(U,X),q(U,X)) = \E_{U,X}[(p(U,X)-q(U,X))^2]$ in the sXQUATH expression (eqn.~\eqref{eqn:sxquath_informal}) is sufficiently smaller than the approximation error $\mathcal{E}(p(U,X),\frac{1}{2^n})$ by the trivial algorithm outputting uniformly random $x$.
This is mainly due to approximation $q(U,X)$ having a sufficiently high correlation with $p(U,X)$ over the outcomes $X$ and random unitary $U$, which is quantified by $\E_{U,X}[p(U,X)q(U,X)]$.
Note that this correlation is proportional to the sXES benchmarking in eqn.~\eqref{eqn:sxes_main}, since the expectation is uniform over all nonzero output $x\neq 0^n$.
Simply put, the reason why approximation $q(U,x)$ works is because for each $U$ it first assigns a uniform probability $2^{-n}$ to all outputs, then adjusts the output probabilities to be slightly closer to the true probability distribution $p(U,x)$ by computing the Fourier coefficient $F(U,\mathbf{r},x)$.
Fourier coefficient $F(U,\mathbf{r},x)$ essentially captures some information of how more or less likely outcome $x$ is compared to the uniform distribution $2^{-n}$ by evaluating the transition amplitudes of two-qubit unitaries operating on the first register of the mQSVT circuit $\mathtt{mQSVT}(U)$.
It can be shown (see methods) that the Fourier coefficient $F(U,\mathbf{r},x)$ used in the approximation $q(U,x)$ has a sufficiently large ``overlap'' with the true probability $p(U,x)$ by expanding $p(U,x)$ in terms of Fourier coefficient of all possible Pauli paths as in eqn.~\eqref{eqn:fourier_expansion_mqsvt}.

\subsection{Pauli Path Algorithm and Haar-random Unitary Moment Matrix}

Here we discuss Haar-Random unitary moment matrix, which plays a central role in the analysis of how good a Pauli path algorithm approximates a random quantum circuit output distribution.
For a given $2$-qubit Pauli matrices $p_1,\dots,p_t \in \mathcal{P}_2$, its expectation over product of $t$ Haar-random $2$-qubit unitary $V$ is given by
\begin{gather}
    \E_V\Big[ V^{\otimes t} p_1 \otimes\dots\otimes p_t (V^\dag)^{\otimes t} \Big] \;.
\end{gather}
This quantity is extensively studied in~\cite{harrow2009random} and used in~\cite{Aharonov_Gao_Landau_Liu_Vazirani_2022} to show the properties of the transition amplitudes in the Fourier coefficients of Pauli paths.
For our purposes we focus on the case of $2$-qubit Paulis $p_1,\dots,p_t \in \mathcal{P}_2$ and $2$-qubit Haar-random unitary $V$.

Now consider a matrix $G$ with each of its entry defined by $t$ two-qubit Paulis $\mathbf{p} = p_1 ,\dots, p_t$ and $\mathbf{q}= q_t ,\dots, q_t$
\begin{align}\label{eqn:haar_unitary_moment_matrix_main}
    G(\mathbf{p};\mathbf{q}) = \tr\bigg( \E_V\Big[ V^{\otimes t} p_1 \otimes\dots\otimes p_t (V^\dag)^{\otimes t} \Big] q_t \otimes\dots\otimes q_t \bigg) \;.
\end{align}
Matrix $G$ is what is known as the two-qubit \textit{Haar-random unitary moment matrix} of order $t$.
For a random $N$-qubit circuit $U$ constructed from Haar-random two-qubit unitaries, then the expectation of Pauli path transition amplitudes can be expressed as a product of expectation of transition amplitudes $\E_V[\llangle p |V|p' \rrangle] = \E_V[\tr(V p V^\dag p')] = G(p;p')$ for two-qubit Haar random unitary $V$ and normalized two-qubit Paulis $p,p'$.
This is precisely the key analysis being done in analyzing the Pauli path algorithm for RCS~\cite{aharonov2023polynomial}.

However in an mQSVT circuit with $d$ blocks we have $2d$ copies of the same Haar-random unitary $U$ (including its conjugate $U^\dag$), which means that each two-qubit Haar-random gate is being used $2d$-times throughout the circuit.
An mQSVT circuit with $d=1$ block contains a single $U$ and its conjugate $U^\dag$.
Thus for Pauli path $\mathbf{s}$ and output $x$, Haar-random expectation of its Fourier coefficient $\E_U[F(U,\mathbf{s},x)]$ can be decomposed as a product of expectations of the transition amplitudes $\E_V[\llangle p|V|q\rrangle \llangle q'|V^\dag|p'\rrangle]$ where $V$ is a random two-qubit gate in $U$ and $U^\dag$.
Since $\llangle q'|V^\dag|p'\rrangle = \llangle p'|V|q'\rrangle$ therefore one can write this two-qubit Haar-random expectation as an entry of the Haar-random moment matrix of order $t=2$
\begin{equation}
\begin{aligned}
    G(p,p';q,q') &= \tr\Big( \E_V\big[V^{\otimes2} p\otimes p' (V^\dag)^{\otimes2}\big] q\otimes q' \Big) \\
    &= \E_V[\llangle p|V|q\rrangle \llangle q'|V^\dag|p'\rrangle] \;.
\end{aligned}
\end{equation}
for each two-qubit Haar random unitary $V$ in $U$.

\section{Refuting sXQUATH}\label{sec:refuting_sxquath}

Here we give an outline of the proof  of Theorem~\ref{thm:sxquath_false}. 
The detailed, more technical proof is deferred to 
Appendix~\ref{App:sxquath_false}.
For readability, we omit the normalization factors of the Paulis in the paths, e.g. we write $Z$ when we mean $Z/\sqrt{2}$.

Consider an mQSVT Pauli path approximation for probability of outcome $x$,
\begin{gather}\label{eqn:pauli_path_classical_estimation_main}
    q(U,x) = \frac{1}{2^n} + F(U,\mathbf{r},x) \;,
\end{gather}
for Pauli path $\mathbf{r} = (Z\otimes I^{\otimes l-1} \otimes Z \otimes I^{\otimes n-l} , Z\otimes I^{\otimes n} , \dots , Z\otimes I^{\otimes n})$.
Since there is only a single block, we write $\mathbf{r} = r_1,\dots,r_{d_U+1},\Tilde{r}_{d_U+1},\dots,\Tilde{r}_1$ where the $r_j$'s and $\Tilde{r}_j$ are input/output Paulis in $U$ and $U^\dag$, respectively.
From the second layer onwards, all Paulis in path $\mathbf{r}$ are identity except for the Pauli $Z$ at the first register.
However, $r_1=Z\otimes I^{\otimes l-1} \otimes Z \otimes I^{\otimes n-l}$ depends on the architecture in the first layer of $U$.
Namely we set $l\in\{2,\dots,n+1\}$ such that the first qubit is coupled with the $l+1$-th qubit in the first layer $U_1$ by two-qubit unitary $V$.
Thus the input Paulis for this two-qubit gate is $Z\otimes Z$.
For example, if in the first layer the two qubit gate that operates on the first qubit couples it with the 5th qubit then $r_1= Z\otimes I^{\otimes3} \otimes Z \otimes I^{\otimes n-4}$.
We also denote the two-qubit gates in the $j$-th layer of $U$ as $V_{j,i}$, for some arbitrary indexing $i\in[(n+1)/2]$ (as the coupling by these gates are arbitrary, i.e. gates $V_{1,1}$ and $V_{2,1}$ may operate on different pair of registers).

Now by using $q(U,x)$ in sXQUATH eqn.~\eqref{eqn:sxquath_informal} as an approximation to $p(U,x)$ and by expanding the squares and simplifying the terms we get
\begin{equation}\label{eqn:sxquath_simplified_main}
\begin{aligned}
    &\textup{sXQ} \\
    &= \frac{1}{2^n-1} \sum_{x\neq 0^n} \bigg( -\underbrace{\frac{2}{2^n}\E_U\Big[F(U,\mathbf{r},x)\Big]}_{(i)} - \underbrace{\E_U\Big[F(U,\mathbf{r},x)^2\Big]}_{(ii)} + \underbrace{2\sum_{\mathbf{s}\neq\mathbf{r}} \E_U\Big[ F(U,\mathbf{s},x)\, F(U,\mathbf{r},x)}_{(iii)} \Big] \bigg) \,,
\end{aligned}
\end{equation}
where we also expand $p(U,x)$ in terms of the mQSVT Fourier coefficients.
For notational simplicity, from now on we will omit the tensor "$\otimes$" in denoting two-qubit Paulis that is an input or output to the same two-qubit gate.
For example, by $V^{\otimes2}(ZZ\otimes II)(V^\dag)^{\otimes2} IZ\otimes II$ we mean $V^{\otimes2} (Z\otimes Z\otimes I\otimes I)(V^\dag)^{\otimes2} I\otimes Z\otimes I\otimes I$.

We  treat the terms $(i),(ii),(iii)$ separately in eqn~\eqref{eqn:sxquath_simplified_main}.
However the main idea is similar.
Since the two-qubit gates $\{V_{j,i}\}_{j,i}$ in $U$ are sampled independently, the expectation $\E_U$ can be decomposed as a product of expectations over each two-qubit gate $\E_{V_{j,i}}$ and the transition amplitudes for the input, output, and the rotation gate (see 
Appendix~\ref{App:single_block_mQSVT_pauli_path}).
More explicitly, this decomposition allows us to express expectation $\E_U$ as the 
product of order-$t$ expectations of two-qubit Haar unitary 
\begin{equation}\label{eqn:two_qubit_haar_expectation}
    \tr\,\E_{V_{j,i}}\Big[ V_{j,i}^{\otimes t} p_{j,i}^{(1)}\otimes\dots\otimes p_{j,i}^{(t)} (V_{j,i}^\dag)^{\otimes t} p_{j+1,i}^{(1)}\otimes\dots\otimes p_{j+1,i}^{(t)} \Big] \,,
\end{equation}
for some positive integer $t$ and two-qubit paulis $p_{j,i}^{(l)},p_{j+1,i}^{(l)}$, where $t=2$ for term $(i)$ and $t=4$ for terms $(ii)$ and $(iii)$.
Note that Paulis in  path $\mathbf{r}$ at the second to $n+1$-th qubit registers are all identity, except for the Paulis at the input $r_1$.
Namely, $r_2,\dots,\Tilde{r}_1$ equal to $Z\otimes I^{\otimes n}$.
As a consequence, most of these expectations are of the form $\tr\E[(VIIV^\dag II)^{\otimes t}]$, which takes the value of $1$ (up to some normalization).
Of course, the only exceptions are the expectations with respect to the two-qubit gates that are interacting with the first qubit register, since the first register Paulis are all $Z$.
When there are non-identity Paulis, expectation~\eqref{eqn:two_qubit_haar_expectation} can take any real values depending on the choice of Paulis and order $t$.
This is precisely where the non-trivialities occur, thus requiring a more extensive analysis on the terms $(i),(ii),(iii)$ above.
Since the mQSVT circuit that we consider only have one block, we omit the superscript for the Paulis in $\mathbf{r}$ and simply write $r_j$ and $\Tilde{r}_j$.

Since in term $(i)$ each two-qubit unitary $V_{j,i}$ appears only two times (in $U$ and $U^\dag$), we only have second moment expectations, allowing us to use the results in~\cite{harrow2009random,aharonov2023polynomial} to determine its values (see 
Appendix~\ref{App:single_block_mQSVT_pauli_path}).
Here we only need to consider transition amplitude for the two-qubit gate $V$ in the first layer operating on the first register and the $l$-th register, the only gate in the first layer which input pauli is $ZZ$.
Since its transition amplitude $\tr\,\E_V[V^{\otimes2} ZZ\otimes ZZ (V^\dag)^{\otimes2} IZ\otimes IZ]=0$, term $(i)$ is 0.

In terms $(ii)$ and $(iii)$, each Haar random two-qubit unitary $V_{j,i}$ appears four times, twice in each Fourier coefficient (as each Fourier coefficient contains $U$ and $U^\dag$).
Hence we consider order $t=4$ of expectation~\eqref{eqn:two_qubit_haar_expectation}, $\E_V[V^{\otimes2} r\otimes \Tilde{r} \otimes s\otimes \Tilde{s} (V^\dag)^{\otimes2} r'\otimes \Tilde{r}' \otimes s'\otimes \Tilde{s}']$ hence the results~\cite{harrow2009random,aharonov2023polynomial} are no longer useful.
We instead make a detour to unitary random matrix theory of Weingarten calculus which studies the expansion of Haar-random unitary moment operation~\cite{gu2013moments,collins2022weingarten} (for applications in quantum theory, see also~\cite{roberts2017chaos,liu2021moments}).
Using the Weingarten calculus framework, we can express the Haar unitary expectations~\eqref{eqn:two_qubit_haar_expectation} in terms of Weingarten functions and permutation operators.
By evaluating the values of the Weingarten functions and analyzing permutations over a tensor product of four two-qubit Paulis (see 
Appendix~\ref{App:single_block_mQSVT_pauli_path}),
we can obtain the relevant values of expectations~\eqref{eqn:two_qubit_haar_expectation}.
As now the expectation~\eqref{eqn:two_qubit_haar_expectation} can take negative values, one needs to take great care that term $(ii)$ and $(iii)$ are sufficiently large in order to obtain $\mathrm{sXQ}=\Omega(2^{-3n})$.
We will show below that our chosen Pauli path $\mathbf{r}$ does this.

For term $(ii)$, the analysis is relatively simple since the expectations~\eqref{eqn:two_qubit_haar_expectation} for gates with non-identity input/output Paulis turns out to be all positive (see 
Appendix~\ref{App:sxquath_false}).
Note that there are precisely $d_U$ many such gates, i.e. those that are operating on the first qubit.
By also accounting for the normalization of $1/\sqrt{2}$ for each Pauli, we get $\E_U[F(U,\mathbf{r},x)^2] \geq 2^{-2n}\alpha^{d_U}$ for some constant $\alpha\in(0,1)$ independent of $n$ and $d_U$.

The analysis of term $(iii)$ is the most involved part of the proof since it involves the Fourier coefficient $F(U,\mathbf{s},x)$ for all Pauli path $\mathbf{s}$.
The trick here is to see for which Pauli path $\mathbf{s}$ the expectation $\E_U[F(U,\mathbf{s},x)F(U,\mathbf{r},x)]$ is negative, positive, and equals to 0.
This can be determined by the output transition amplitude $\llangle 0x|\Tilde{s}\rrangle$ in $F(U,\mathbf{s},x)$ and by the two-qubit Haar unitary expectations~\eqref{eqn:two_qubit_haar_expectation}.
The analysis of the latter is made slightly simpler by our choice of Pauli path $\mathbf{r}$ since we only need to analyze three types of expectations~\eqref{eqn:two_qubit_haar_expectation}, namely those where the input (output) two-qubit Paulis in $F(U,\mathbf{r},x)$ are either $ZZ$ ($ZI$) or $ZI$ ($ZI$) or $II$ ($II$).
Then we evaluate the expectations using the Weingarten function expansion (see 
Appendix~\ref{App:single_block_mQSVT_pauli_path}) 
to identify the non-zero values.
It turns out that $\E_U[F(U,\mathbf{s},x)F(U,\mathbf{r},x)]$ is non-zero only if the input/output two-qubit Paulis of $\mathbf{s}$ to each two qubit gate $V_{j,i}$ must be from the set $\{II,ZZ,ZI,IZ,PX,PY\}$ for $P\in\{I,X,Y,Z\}$.
Furthermore since $\<b|P|b\>=0$ for $P\in\{X,Y\}$ and $b\in\{0,1\}$, the paths $\mathbf{s}$ with Pauli $X$ or $Y$ at the output transition amplitude has $F(U,\mathbf{s},x)=0$, so we do not need to consider such Pauli paths.
The other complicated analysis comes from the fact that some output transition amplitude $\llangle 0x|\Tilde{s}\rrangle$ can be negative because $\<1|Z|1\>=-\frac{1}{2}$.
However we can pull the sum $\sum_{x\neq0^n}$ inside the brackets in eqn.~\eqref{eqn:sxquath_simplified_main} and instead consider $\sum_{x\neq0^n}\sum_{\mathbf{s}\neq\mathbf{r}} \E_U[F(U,\mathbf{s},x)F(U,\mathbf{r},x)]$.
From here we can separate out the negative terms and positive terms according to $x\neq0^n$ and the measurement layer of Paulis in $\mathbf{s}$.
The reader can find the detailed calculation in 
Appendix~\ref{App:sxquath_false}, 
but essentially the negative terms from to transition amplitude at the measurement layer $\llangle 0x|\Tilde{s}\rrangle$ and from the expectations~\eqref{eqn:two_qubit_haar_expectation} either cancel each other out or small enough with respect to the positive terms in the sum over $x\neq0$ and $\mathbf{s}\neq\mathbf{r}$.
This rather involved calculation gives us $\sum_{x\neq0^n}\sum_{\mathbf{s}\neq\mathbf{r}} \E_U[F(U,\mathbf{s},x)F(U,\mathbf{r},x)] \geq 2^{-n}\beta^{d_U}$ for some constant $\beta\in(0,1)$ independent of $n$ and $d_U$.

Finally, we can substitute back in the values we obtained for terms $(i),(ii),(iii)$ to eqn.~\eqref{eqn:sxquath_simplified_main} to get 
\begin{equation}
    \mathrm{sXQ} \geq \frac{1}{2^n-1}\Big( (2^n-1)2^{-2n}\alpha^{d_U} + 2^{-n}\beta^{d_U} \Big) \;.
\end{equation}
Thus for $U$ with depth sublinear in $n$, i.e. $d_U=o(n)$, we get $\mathrm{sXQ} = \Omega(2^{-2n}(\alpha+\beta)^n) = \Omega(2^{-3n})$ which refutes sXQUATH for single-block mQSVT circuit, proving Theorem~\ref{thm:sxquath_false}.

\section{Spoofing sXES benchmark}\label{sec:spoofing_sxes}

The same Pauli path that we use to refute sXQUATH above can also be used to spoof the average sXES benchmark~\eqref{eqn:sxes_main} as stated in Theorem~\ref{thm:spoofing_sXES}.
We use the same Pauli path approximation $q(U,x)$ (eqn.~\eqref{eqn:pauli_path_classical_estimation_main}) in place of experimental probability $p_\mathrm{exp}$.
Details are given in 
Appendix~\ref{App:spoofing_sXES}, 
but the proof idea is similar to the proof of Theorem~\ref{thm:sxquath_false} above.
First we expand $p(U,x)$ in terms of fourier coefficients $F(U,\mathbf{s},x)$ to obtain
\begin{equation}
\begin{aligned}
    \mathrm{sXES} &= \sum_{x\neq0^n} \E_U\Big[ \frac{p(U,x)}{2^n}\Big] + \sum_{\mathbf{s}\neq\mathbf{r}} \E_U\Big[F(U,\mathbf{s},x) \, F(U,\mathbf{r},x) \Big] \;.
\end{aligned}
\end{equation}
Note that the second term above is equal to term $(iii)$ in eqn.~\eqref{eqn:sxquath_simplified_main}, which can be lower bounded as $\sum_{x\neq0^n}\sum_{\mathbf{s}\neq\mathbf{r}} \E_U[ F(U,\mathbf{s},x)\, F(U,\mathbf{r},x) ] \leq 2^{-n}c^{d_U}$ for some constant $c\in(0,1)$ independent of $n$ and $d_U$.
By also noting that $\E_U[p(U,x)] = \frac{1}{2^{n+1}}$ for all $x$, we can lower bound sXES score as
\begin{equation}
\begin{aligned}
    \mathrm{sXES} &\geq \frac{2^n-1}{2^{2n+1}} +
    2^{-n}c^{d_U} \approx \frac{1+c^{d_U}}{2^n}
\end{aligned}
\end{equation}
for some constant $c\in(0,1)$ independent of $n$ and $d_U$.

\section{Discussion}

Linear XEB has been the method of choice in verifying sampling-based quantum supremacy claim in many experiments.
However, results on efficient classical sampling algorithms spoofing Linear XEB and a result that disproves the complexity-theoretic assumption which hardness of Linear XEB rests upon cast doubts on its robustness.
In an effort for a more robust quantum supremacy benchmarking scheme, a variant of Linear XEB called sXES has been proposed for quantum Hamiltonian simulation experiments.
Its promise relies upon the quantum circuit that it samples from being structurally distinct from circuits used for other Linear XEB, hence preventing current spoofing results to be applied directly, while also relying on a different complexity-theoretic foundation.
We have shown that sXES benchmarking for experiments corruped by high amount of noise is susceptible to spoofing.
At the same time, our result also shows the frailty of the complexity-theoretic foundation that sXES rests upon, particularly for sublinear-depth circuits.
Our negative results extend the seminal result~\cite{aharonov2023polynomial} that spoofs Linear XEB for unitary random circuit sampling and further support the need for a novel benchmarking task with a stronger complexity-theoretic guarantee for any future quantum supremacy experiments.
In particular, such guarantee needs to rule out any possibility of spoofing by the state-of-the-art algorithms used in our result as well as those used in previous spoofing results.

\section*{Acknowledgements}
The authors thank Dax Koh and Rahul Jain for interesting discussions.
AT is supported by CQT PhD Scholarship. This work is supported by the National Research Foundation of Singapore through the NRF Investigatorship Program (Award No. NRF-NRFI09-0010), the Singapore Ministry of Education Tier 1 Grant RG91/25 and RT4/23, the National Quantum Office, hosted in A*STAR, under its Centre for Quantum Technologies Funding Initiative (S24Q2d0009), and A*STAR C230917003. 
\bibliographystyle{quantum}
\bibliography{references}
\appendix

\section{Random Circuit Sampling Benchmark}\label{App:RCS_benchmarking}

Linear XEB benchmarking for an RCS experiment quantifies how close the noisy circuit implementation is to the ideal circuit.
It is done by analyzing samples taken from the noisy circuit to obtain an empirical estimation of the circuit's output probability distribution.

\begin{definition}\label{def:XEB}
    Linear cross-entropy benchmark (XEB) of a noisy implementation of an $n$ qubit unitary $U$ is defined as
    \begin{gather}\label{eqn:xeb}
        \mathrm{XEB}(U) = 2^n \sum_{x\neq 0^n} p(U,x) \, p_\mathrm{exp}(U,x) - 1
    \end{gather}
    where $p_\mathrm{exp}(U,x)$ is the output probability from the noisy implementation of circuit $U$.
\end{definition}

To argue for the quantum supremacy from an RCS experiment, a noisy circuit used for the experiment with sufficiently high Linear XEB (on average over some distribution that circuit $U$ is sampled from) implies its ability to solve a problem that is supposedly hard for classical computers.
Aaronson and Gunn proposed the XHOG problem~\cite{aaronson2020classical}, which demands an algorithm to output a set of strings that have sufficiently high probabilities from a given circuit $U$.

\begin{definition}\label{def:XHOG}
    Linear cross-entropy heavy output generation (XHOG) problem:
    For a given $n$ qubit unitary $U$ and some $b>1$, output $k$ many distinct non-zero $n$ bit strings $\{x_1,\dots,x_k\}\subseteq\{0,1\}^n\backslash\{0^n\}$ such that they satisfy
    \begin{gather}
        \frac{1}{k} \sum_{j=1}^k p(U,x_j) > \frac{b}{2^n} \;.
    \end{gather}
\end{definition}

However, showing the hardness of solving XHOG classically remains a problem.
Aaronson and Gunn have shown that if there is no classical algorithm that can approximate the output probability of the all zero string $0^n$ from a random circuit $U$ with an error that is slightly less than that of a trivial algorithm, then XHOG is a hard problem for classical computers to solve with some high probability for a suitable choice of $k$ and $b$. 

\begin{definition}\label{def:xquath}
    Linear cross-entropy quantum threshold assumption (XQUATH):
    There is no polynomial-time classical algorithm $\mathtt{C}$ that given a description of an $n$ qubit quantum circuit $U$ with depth $d$ and input bits $0^n$ and output bits $0^n$ computes the output probability $p_\mathtt{C}(U,0^n)$ such that the following holds for all $n\in\N$
    \begin{gather}
        \textup{XQ} = 2^{2n} \bigg( \E_U\Big[\Big(p(U,0^n)-\frac{1}{2^n}\Big)^2\Big] - \E_U\Big[\Big(p(U,0^n)-q_{\mathtt{C}}(U,0^n)\Big)^2\Big] \bigg) = \Omega\Big(\frac{1}{2^n}\Big)
    \end{gather}
    where and $p(U,0^n)$ is the probability of output $0^n$ on mQSVT circuit for a given unitary $U$.
    The expectations are over Haar-random two qubit unitary gates that makes up $U$ where each qubit goes through a two-qubit gate at each layer.
\end{definition}

The Pauli path algorithm is used in~\cite{aharonov2023polynomial} as an efficient classical approximation to the output probabilities of a random unitary circuit $U$ with sufficiently low MSE to show that XQUATH does not hold for circuit $U$ of sublinear depth.

\section{Hardness of classically spoofing mQSVT circuit benchmarking}\label{App:mqsvt_benchmark}

\begin{figure*}
    \includegraphics[width=\linewidth]{figures/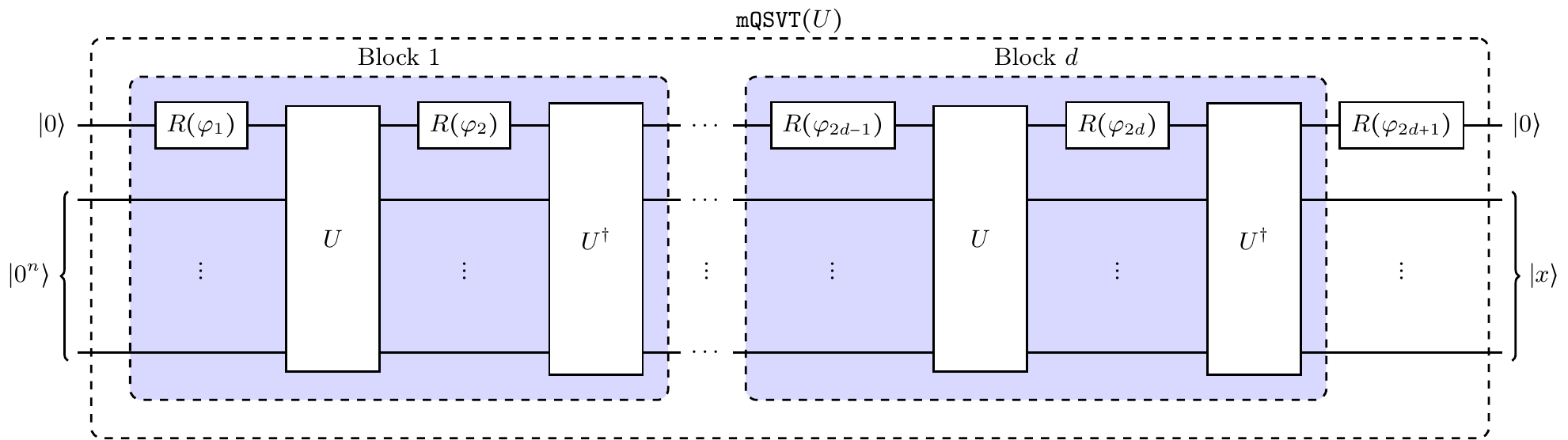}
    \caption{mQSVT circuit $\mathtt{mQSVT}(U)$ where $U$ is a random $n+1$ qubit unitary made out of 2 qubit Haar-random unitaries and $R(\varphi)$ is $Z$-rotation gate with angle $\varphi$.}
    \label{fig:mqsvt_circuit2}
\end{figure*}

Formally, an mQSVT circuit consists of $d$ ``blocks'', each containing a copy of $n+1$ qubit unitary $U$ and a copy of its conjugate $U^\dag$ interleaved by phase shift gates $R_z(\varphi)$ at the first register (see Fig.~\ref{fig:mqsvt_circuit2}).
The probability of output $x\in\{0,1\}^n$ from an mQSVT circuit with $d$ layers and unitary $U$ is
\begin{gather}
    p(U,x) = \Big|\<0x| \Big(R_z(\varphi_{2d+1})\otimes I^n\Big) \prod_{k=0}^d \bigg( U^\dag \Big(R_z(\varphi_{2k+2})\otimes I^n\Big) U \Big(R_z(\varphi_{2k+1})\otimes I^n\Big) \bigg) |0^{n+1}\>\Big|^2 \;,
\end{gather}
where $R_z(\varphi) = e^{i\varphi Z}$.
So if $d_U$ is depth of circuit $U$, then the total depth of the mQSVT circuit is $d_\mathrm{mQSVT} = d(2d_U+2)+1$.

In an Hamiltonian simulation experiment using a noisy mQSVT circuit, one is ought to benchmark this noisy implementation with respect to the ideal (noiseless) mQSVT circuit to determine the simulation performance.
Similar to XEB benchmarking that is used in RCS experiment~\cite{Arute_2019}, a benchmarking metric called the linear cross-entropy score (sXES) has been proposed in \cite{Dong_2022} to benchmark a noisy Hamiltonian simulation experiments.
The sXES metric can be used to compute the fidelity between the implemented noisy circuit and ideal circuit~\cite[eqn.(4)]{Dong_2022} by sampling the implemented device and performing some classical computations. 

\begin{definition}
    System linear cross-entropy score (sXES) of a noisy implementation of an mQSVT circuit given a unitary $U$ is defined as
    \begin{gather}\label{eqn:sxes}
        \mathrm{sXES}(U) = \sum_{x\neq 0^n} p(U,x) \, p_\mathrm{exp}(U,x)
    \end{gather}
    where $p_\mathrm{exp}(U,x)$ is the output probability from the noisy implementation of mQSVT circuit.
\end{definition}

How well the performance of an mQSVT circuit implementation as quantified by sXES is shown to determine whether it is able to solve a problem called the system linear cross-entropy heavy output generation (sXHOG).
Essentially, sXHOG problem requires the implemented circuit to output a fixed amount of distinct $n$-bit strings that have sufficiently large output probabilities from the ideal mQSVT circuit.

\begin{definition}
    System linear cross-entropy heavy output generation (sXHOG) problem:
    For a given mQSVT circuit with $n+1$ qubit unitary $U$ and some $b>1$, output $k$ many distinct non-zero $n$ bit strings $\{x_1,\dots,x_k\}\subseteq\{0,1\}^n\backslash\{0^n\}$ such that they satisfy
    \begin{gather}
        \frac{1}{k} \sum_{j=1}^k p(U,x_j) > \frac{b}{2^n} \;.
    \end{gather}
\end{definition}

sXHOG can be solved by sampling from an MQSVT implementation with sufficiently high sXES score (as it implies a sufficiently high circuitsfidelity).
However, the hardness of solving sXHOG with some high probability depending on a suitable choice of $k$ and $b$ using a classical algorithm has been asserted under an assumption that any polynomial-time classical algorithm cannot approximate the ideal mQSVT circuit output probability better (in mean squared error) than a trivial uniform approximation up to some negligible margin. 
This assumption is what is called the System linear cross-entropy quantum threshold assumption (sXQUATH).

\begin{definition}\label{def:sxquath}
    System linear cross-entropy quantum threshold assumption (sXQUATH) conjecture:
    There is no polynomial-time classical algorithm $\mathtt{C}$ such that the following holds for an mQSVT circuit for all $n\in\N$
    \begin{gather}
        \textup{sXQ} = 2^{2n} \bigg( \E_{U,X}\Big[\Big(p(U,X)-\frac{1}{2^n}\Big)^2\Big] - \E_{U,X}\Big[\Big(p(U,X)-q_{\mathtt{C}}(U,X)\Big)^2\Big] \bigg) = \Omega\Big(\frac{1}{2^n}\Big)
    \end{gather}
    where $n$ is the number of input/output qubits, and $p(U,X)$ is the probability of output $X$ on mQSVT circuit for a given unitary $U$, and $q_{\mathtt{C}}(U,X)$ is the probability of classical algorithm $\mathtt{C}$ simulating an mQSVT circuit outputting $X$ for a given unitary $U$.
    The expectations are over uniformly distributed random variable $X\in\{0,1\}^n\backslash\{0^n\}$ and over $n+1$ qubits Haar random unitaries $U$.
\end{definition}

Similar to XQUATH, sXQUATH conjecture states that there is no classical algorithm approximation of the output probability of an mQSVT circuit that has a mean squared error that is $2^{-n}$ less than the mean squared error of a trivial approximation.

\section{Reductions between threshold assumptions}\label{App:thershold_assumptions}

Here we discuss the complexity-theoretic relationship between XQUATH, sXQUATH, and its precursor, QUATH (Quantum threshold assumption).
The latter is related to another sampling-based quantum supremacy verification method called Heavy Output Generation (HOG)~\cite{aaronson2016complexitytheoretic}, which is a precursor to XHOG and sXHOG.
Here we again consider sampling from $n$ qubit unitary random circuit $U$ with all zero input and probability of outputting some string $x$ is $p(x,U) = |\<x|U|0^n\>|^2$.

\begin{definition}
    Heavy output generation (HOG) problem:
    For a random $n$ qubit unitary circuit $U$ drawn from some distribution $\mathcal{D}$, output $k$ many $n$ bit strings $\{x_1,\dots,x_k\}\subseteq\{0,1\}^n$ such that $\frac{2}{3}$ of their probabilities $p(x_1,U),\dots,p(x_k,U)$ are larger than the median of all the output probabilities $\textup{median}_{U\sim\mathcal{D}}\{p(x,U) : x\in\{0,1\}^n\}$.
\end{definition}

Compared to XHOG and sXHOG, HOG is clearly the most distinct as it requires that a certain proportion of the output strings to have probability larger than a value (the median) instead of requiring the sum of the probabilities to be larger than some value (as in XHOG and sXHOG).
However, similarly to XHOG and sXHOG, the hardness of classically solving HOG is also based on a conjecture called the Quantum threshold assumption (QUATH)~\cite{aaronson2016complexitytheoretic}, which as opposed to XQUATH and sXQUATH, is a conjecture for a decision problem.

\begin{definition}
    Quantum threshold assumption (QUATH):
    For some distribution $\mathcal{D}$ over $n$ qubit unitary circuit $U$, there is no polynomial-time classical algorithm taking a description of $U$ as input and outputs "Yes" if $p(0,U)$ is larger than $\textup{median}_{U\sim\mathcal{D}}\{p(x,U) : x\in\{0,1\}^n\}$ and "No" otherwise, with probability $\frac{1}{2} + \Omega(2^{-n})$ over $\mathcal{D}$.
\end{definition}

To obtain a better understanding on what features does a robust benchmarking scheme must have, one can gain insights by looking into the relationship between QUATH and XQUATH and sXQUATH.
Namely, whether one can be reduced from another is unclear.
Also, as XQUATH and sXQUATH for sublinear depth random unitaries has been refuted by using the Pauli path algorithms, it is interesting to further investigate whether the Pauli path algorithm can also disprove QUATH.

Even for XQUATH and sXQUATH, which bear most similarity among the three, their relationship does not seem to be straightforward.
As mentioned, the difficulty here is due to multiple copies of random unitary in an mQSVT circuit. 
Hence one would need to find some nontrivial correspondence between algorithms that approximates output probabilities a circuit containing multiple copies of random unitary $U$ and those with only a single $U$.
If we consider random unitaries made out of Haar-random two-qubit unitaries used in this work and in~\cite{aharonov2023polynomial}, we expect that this relationship between XQUATH and sXQUATH can be better understood by a deeper investigation into the relationship between two-qubit Haar-random unitary moment matrices of moment $t=2$ and $t>2$.

\section{Haar-random unitary moment matrix}\label{App:haar_unitary_moment_matrix}

As we will see later on how Pauli path algorithm is used to refute XQUATH and sXQUATH, an important ingredient is the expectation of $t$ Pauli matrices over product of $t$ Haar-random unitaries
\begin{gather}
    \E_V\Big[ V^{\otimes t} p_1 \otimes\dots\otimes p_t (V^\dag)^{\otimes t} \Big] \;,
\end{gather}
where $V$ is a $l$ qubit unitary matrix and $p_1,\dots,p_t \in \mathcal{P}_l$.
This quantity is extensively studied in~\cite{harrow2009random} and used in~\cite{Aharonov_Gao_Landau_Liu_Vazirani_2022} to show the properties of the transition amplitudes in the Fourier coefficients of Pauli paths.
For our purposes we focus on the case where $l=2$, i.e. $p_1,\dots,p_t \in \mathcal{P}_2$ are normalized two qubit Pauli matrices and the unitaries $V$ are two-qubit unitaries.

Now consider a matrix indexed by $t$ two-qubit Paulis
\begin{align}\label{eqn:haar_unitary_moment_matrix}
    G(\mathbf{p};\mathbf{q}) = \tr\bigg( \E_V\Big[ V^{\otimes t} p_1 \otimes\dots\otimes p_t (V^\dag)^{\otimes t} \Big] q_t \otimes\dots\otimes q_t \bigg) \;.
\end{align}
If one considers a random $N$-qubit circuit $U$ constructed from Haar-random two-qubit unitaries (as in~\cite{aharonov2023polynomial}), then the expectation of Pauli path transition amplitudes can be expressed as a product of expectation of transition amplitudes $\E_V[\llangle p |V|p' \rrangle] = \E_V[\tr(V p V^\dag p')] = G(p;p')$ for two-qubit Haar random unitary $V$ and normalized two-qubit paulis $p,p'$.
This is precisely the key analysis being done in analyzing the Pauli path algorithm for RCS~\cite{aharonov2023polynomial}

\section{Pauli path for mQSVT circuit}\label{App:mQSVT_pauli_path}

We may define a Pauli path $\mathbf{s}$ through an mQSVT circuit as a sequence of sub-paths $\mathbf{s}^{(k)} = s_1^{(k)},\dots,s_{d_U+1}^{(k)}$ and $\Tilde{\mathbf{s}}^{(k)} = \Tilde{s}_{d_U+1}^{(k)},\dots,\Tilde{s}_1^{(k)}$ that goes through unitary $U$ and $U^\dag$ in the $k$-th block, respectively.
Since phases $\varphi_1,\dots,\varphi_{2d+1}$ in an mQSVT circuit are fixed, we can instead simply absorb the rotation gates $R_z(\varphi_1)$ and $R_z(\varphi_{2d+1})$ at the beginning and end of the circuit into the preparation and measurement, respectively, in the mQSVT Pauli path simulation.
So for a single Pauli path $\mathbf{s}$ through a mQSVT circuit with unitary $U$ gives us a Fourier coefficient of the form
\begin{gather}
    F(U,\mathbf{s},x) = \llangle0x|\Tilde{s}_1^{(d)}\rrangle \bigg( \prod_{k=1}^d F_k(U,\mathbf{s}) \bigg) \llangle s_1^{(1)}|0^{n+1}\rrangle
\end{gather}
where we define
\begin{equation}
\begin{gathered}
    F_1(U,\mathbf{s}) = f_2(U^\dag,\mathbf{s})\, \llangle \Tilde{s}_{d_U+1}^{(k)}|R_z(\varphi_2)|s_{d_U+1}^{(k)}\rrangle\, f_1(U,\mathbf{s}) \;, \\
    F_k(U,\mathbf{s}) = f_{2k}(U^\dag,\mathbf{s})\, \llangle \Tilde{s}_{d_U+1}^{(k)}|R_z(\varphi_{2k})|s_{d_U+1}^{(k)}\rrangle\, f_{2k-1}(U,\mathbf{s})\, \llangle s_1^{(k)}|R_z(\varphi_{2k-1})|\Tilde{s}_1^{(k-1)}\rrangle \;, \\
    f_{2k-1}(U,\mathbf{s}) = \llangle s_{d_U+1}^{(k)}|U_{d_U}|s_{d_U}^{(k)} \rrangle \dots \llangle s_2^{(k)}|U_1| s_1^{(k)} \rrangle \;, \\
    f_{2k}(U^\dag,\mathbf{s}) = \llangle \Tilde{s}_1^{(k)}|U_1^\dag|\Tilde{s}_2^{(k)} \rrangle \dots \llangle \Tilde{s}_{d_U}^{(k)}|U_{d_U}^\dag| \Tilde{s}_{d_U+1}^{(k)} \rrangle
\end{gathered}
\end{equation}
where $U_j$ is the $j$-th layer of $U$ for $j\in\{1,\dots,d_U\}$.
Hence we can explicitly write an mQSVT Pauli path $\mathbf{s}$ in terms of its sub-paths as
\begin{gather}
    \mathbf{s} = \mathbf{s}^{(1)},\Tilde{\mathbf{s}}^{(1)},\mathbf{s}^{(2)},\Tilde{\mathbf{s}}^{(2)},\dots, \mathbf{s}^{(d)},\Tilde{\mathbf{s}}^{(d)} \;,
\end{gather}
which consists of $2d(d_U+1)$ many $n+1$-qubit Paulis.
Hence in terms of Pauli paths, the probability of output $x$ from an mQSVT circuit with unitary $U$ is
\begin{align}\label{eqn:outut_proba_pauli_path_expansion}
\begin{split}
    p(U,x) &= \sum_{\mathbf{s}} F(U,\mathbf{s},x) \;.
\end{split}
\end{align}

\section{Pauli path for single-block mQSVT circuit}\label{App:single_block_mQSVT_pauli_path}

For the special case of $d=1$, the probability of an mQSVT circuit with $n+1$ qubit unitary $U$ outputting string $x$ is
\begin{equation}
\begin{aligned}
    p(U,x) &= \big|\<0x| \big( R(\varphi_2)\otimes I^{\otimes n}\big) U^\dag \big( R(\varphi_1)\otimes I^{\otimes n}\big) U \big( R(\varphi_0)\otimes I^{\otimes n}\big) |0^{n+1}\>\big|^2 \\
    &= \big| \<0x| U^\dag \big( R(\varphi_1)\otimes I^{\otimes n}\big) U |0^{n+1}\> \big|^2 \\
    &= \sum_{\mathbf{s}} F(U,\mathbf{s},x)
\end{aligned}
\end{equation}
where we have Pauli path $\mathbf{s} = s_1,\dots,s_{d_U+1},\Tilde{s}_{d_U+1},\dots,\Tilde{s}_1$ and Fourier coefficient
\begin{equation}
\begin{aligned}
    F(U,\mathbf{s},x) &= \llangle 0x|\Tilde{s}_1 \rrangle \llangle\Tilde{s}_1|U_1^\dag|\Tilde{s}_2\rrangle \dots \llangle\Tilde{s}_{d_U}|U_1^\dag|\Tilde{s}_{d_U+1}\rrangle \llangle\Tilde{s}_{d_U+1}|R(\varphi_1)|s_{d_U+1}\rrangle \\
    &\quad \llangle s_{d_U+1}|U_{d_U}|s_{d_U}\rrangle \dots \llangle s_2|U_1|s_1\rrangle \llangle s_1|0^{n+1}\rrangle \;.
\end{aligned}
\end{equation}

We employ the circuit architecture of the random unitary circuit in~\cite{aharonov2023polynomial} to each copy of random unitary $U$ in the mQSVT circuit.
This unitary $U$ consists of two qubit gates, each drawn independently from the two-qubit unitary Haar measure.
Every qubit register goes through exactly one two-qubit unitary in each layer $U_1,\dots,U_{d+1}$.
No geometric locality on which two qubit registers any two-qubit gate operates on is assumed in this architecture. 
However since each qubit must go through a two-qubit gate in each layer we assume that the number of qubits $n+1$ is even so that in layer $U_j$ there are $(n+1)/2$ two-qubit gates $V_{j,1},\dots,V_{j,(n+1)/2}$.

First we look at the expectation of a Fourier coefficient for a single Pauli path for a single-block mQSVT circuit over Haar-random unitary $U$
\begin{equation}\label{eqn:expetation_fourier_mqsvt}
\begin{aligned}
    \E_U\Big[ F(U,\mathbf{s},x) \Big] &=  \llangle 0x|\Tilde{s}_1 \rrangle \llangle\Tilde{s}_{d_U+1}|R(\varphi_1)|s_{d_U+1}\rrangle \llangle s_1|0^{n+1}\rrangle \\
    &\quad \prod_{j=1}^{d_U} \prod_{i=1}^{(n+1)/2} \E_{V_{j,i}}\Big[ \tr\Big( V_{j,i}^{\otimes 2} (s_{j,i} \otimes \Tilde{s}_{j,i}) (V_{j,i}^\dag)^{\otimes 2} (s_{j+1,i} \otimes \Tilde{s}_{j+1,i}) \Big) \Big] \;.
\end{aligned}
\end{equation}
Note that the Haar-random expectation of a Fourier coefficient for the single-block mQSVT circuit involves a second-order Haar-random unitary moment, as there are two copies of every two-qubit gate $V_{j,i}$ in the circuit.
In contrast, the expectation of a random unitary circuit involves only the first-order moment
\begin{gather}
    \E_U\Big[ f(U,\mathbf{s},x) \Big] = \llangle x|s_{d_U+1} \rrangle \llangle s_1|0^{n+1}\rrangle 
    \prod_{j=1}^{d_U} \prod_{i=1}^{(n+1)/2} \E_{V_{j,i}}\Big[ \tr\Big( V_{j,i} s_{j,i} V_{j,i}^\dag s_{j+1,i} \Big) \Big] \;,
\end{gather}
where $f(U,\mathbf{s},x)$ is the Fourier coefficient of unitary circuit $U$ with output $x$ and Pauli path $\mathbf{s} = s_1,\dots,s_{d_U+1}$.
Although $\E_U[f(U,\mathbf{s},x)] = 0$ for any Pauli path $\mathbf{s}$ containing a non-identity Pauli, this is not the case for mQSVT circuits.
To identify the value of $\E_U[ F(U,\mathbf{s},x)]$ we need to look at the entries of the Haar unitary moment matrix $G$ in~\eqref{eqn:haar_unitary_moment_matrix} for $t=2$ which is shown in~\cite{harrow2009random} to be
\begin{equation}
\begin{aligned}\label{eqn:order_2_haar_moment}
        G( s,\Tilde{s} ; s',\Tilde{s}' ) &= \E_V\Big[ \tr\Big( V^{\otimes2} ( s \otimes \Tilde{s} ) (V^\dag)^{\otimes2} (s' \otimes \Tilde{s}') \Big) \Big] \\
        &= 
        \begin{cases}
            0 \quad&\textup{if $s\neq\Tilde{s}$ or $s'\neq\Tilde{s}'$}\\
            0 \quad&\textup{if exactly one of $s$ or $s'$ is $I\otimes I/2$}\\
            0 \quad&\textup{if exactly one of $\Tilde{s}$ or $\Tilde{s}'$ is $I\otimes I/2$}\\
            1 \quad&\textup{if $s=\Tilde{s}=s'=\Tilde{s}' = I\otimes I/2$}\\
            \frac{1}{15} \quad&\textup{if $s=\Tilde{s}\neq I\otimes I/2$ and $s'=\Tilde{s}'\neq I\otimes I/2$}
        \end{cases} \;.
\end{aligned}    
\end{equation}
By substituting these values into~\eqref{eqn:expetation_fourier_mqsvt} we can see that $\E_U[ F(U,\mathbf{s},x)] = 0$ if and only if (1) there is an input or output Pauli to some two-qubit gate $V_{j,i}$ that differs in the two unitaries in the mQSVT circuits, or (2) only either an input or output Pauli to some two-qubit gate $V_{j,i}$ is identity, i.e. there exists some $j\in[d_U]$ and $i\in[\frac{n+1}{2}]$ such that $s_{j,i}\neq \Tilde{s}_{j,i}$ or $s_{j+1,i}\neq \Tilde{s}_{j+1,i}$ or only one of $s_{j,i},s_{j+1,i}$ is identity or only one of $\Tilde{s}_{j,i},\Tilde{s}_{j+1,i}$ is identity.

Recall that in the Pauli path simulation for random unitary circuit in~\cite{aharonov2023polynomial}, Fourier coefficient of different Pauli path are orthogonal, i.e.
\begin{gather}
    \E_U\Big[ f(U,\mathbf{r},x) f(U,\mathbf{s},x) \Big] = 0
\end{gather}
for $\mathbf{r}\neq\mathbf{s}$.
This Fourier coefficient orthogonality condition does not hold for mQSVT circuit in general.
For a single-block mQSVT circuit the Fourier coefficient expectation above can be expanded as
\begin{equation}\label{eqn:expetation_two paths_fourier_mqsvt}
\begin{aligned}
    & \E_U\Big[ F(U,\mathbf{r},x) F(U,\mathbf{s},x) \Big] \\
    &= \llangle 0x|\Tilde{r}_1 \rrangle \llangle 0x|\Tilde{s}_1 \rrangle 
    \llangle\Tilde{r}_{d_U+1}|R(\varphi_1)|r_{d_U+1}\rrangle \llangle\Tilde{s}_{d_U+1}|R(\varphi_1)|s_{d_U+1}\rrangle 
    \llangle r_1|0^{n+1}\rrangle \llangle s_1|0^{n+1}\rrangle \\
    &\quad \prod_{j=1}^{d_U} \prod_{i=1}^{(n+1)/2} \E_{V_{j,i}}\Big[ \tr\Big( V_{j,i}^{\otimes 4} (r_{j,i} \otimes \Tilde{r}_{j,i} \otimes s_{j,i} \otimes \Tilde{s}_{j,i}) (V_{j,i}^\dag)^{\otimes 4} (r_{j+1,i} \otimes \Tilde{r}_{j+1,i} \otimes s_{j+1,i} \otimes \Tilde{s}_{j+1,i}) \Big) \Big]
\end{aligned}
\end{equation}
where the expectations are over two qubit unitaries $V_{j,i}$ and $r_{j,i} , \Tilde{r}_{j,i} , s_{j,i} , \Tilde{s}_{j,i} , r_{j+1,i} , \Tilde{r}_{j+1,i} , s_{j+1,i} , \Tilde{s}_{j+1,i}$ are normalized two qubit Paulis which at the inputs and outputs of $V_{j,i}$.
The two qubit unitary expectation terms above are the entries of Haar unitary moment matrix $G$ in~\eqref{eqn:haar_unitary_moment_matrix} for $t=4$ which can be expanded in terms of the Weingarten functions $\mathrm{Wg}$ as
\begin{equation}
\begin{aligned}\label{eqn:order_4_haar_moment_weingarten}
        G( r,\Tilde{r},s,\Tilde{s} ; r',\Tilde{r}',s',\Tilde{s}' ) &= \E_V\Big[ \tr\Big( V^{\otimes4} ( r \otimes \Tilde{r} \otimes s \otimes \Tilde{s} ) (V^\dag)^{\otimes4} (r' \otimes \Tilde{r}' \otimes s' \otimes \Tilde{s}') \Big) \Big] \\
        &= \sum_{\tau,\pi} \mathrm{Wg}(\tau\pi) \tr(W_{\tau^{-1}} r \otimes \Tilde{r} \otimes s \otimes \Tilde{s}) \, \tr(W_\pi r' \otimes \Tilde{r}' \otimes s' \otimes \Tilde{s}') \;,
\end{aligned}    
\end{equation}
where $\tau,\pi$ are permutations in $S_4$ and $\tau\pi$ is the composition of permutations $\pi$ and $\tau$.

To determine the value of expectation over the Fourier coefficients of mQSVT circuit in~\eqref{eqn:expetation_two paths_fourier_mqsvt}, we need to analyze the terms in the sum in~\eqref{eqn:order_4_haar_moment_weingarten}.
First, we need to determine for which Paulis the value of~\eqref{eqn:order_4_haar_moment_weingarten} is zero.
Here we define the \textit{cyclic partition} of a permutation $\pi$ on ordered set $B$ as the partition of $B$ where each partition is a subset of $B$ that is permuted by a cycle in the cyclic decomposition of $\pi$.
For the term $\tr(W_\pi r \otimes \Tilde{r} \otimes s \otimes \Tilde{s})$, the set being permuted is $\{r , \Tilde{r} , s , \Tilde{s}\}$.
So, if the cyclic decomposition is $\pi=(134)(2)$ then its cyclic partition is $\{ \{r,s,\Tilde{s}\}, \{\Tilde{r}\} \}$.
For simplicity in what follows, let us denote the unnormalized two-qubit Paulis by the capital letters of their corresponding normalized two-qubit Paulis, e.g. $R$ for $r$ and $\Tilde{S}$ for $s$, so we can write eqn~\eqref{eqn:order_4_haar_moment_weingarten} as
\begin{equation}
\begin{aligned}
        G( r,\Tilde{r},s,\Tilde{s} ; r',\Tilde{r}',s',\Tilde{s}' ) &= \frac{1}{2^8} \sum_{\tau,\pi} \mathrm{Wg}(\tau\pi) \tr(W_{\tau^{-1}} R \otimes \Tilde{R} \otimes S \otimes \Tilde{S}) \, \tr(W_\pi R' \otimes \Tilde{R}' \otimes S' \otimes \Tilde{S}') \;.
\end{aligned}    
\end{equation}
Now as a first step to evaluate some of the relevant values of~\eqref{eqn:order_4_haar_moment_weingarten} we show the following lemma.

\begin{lemma}\label{lem:nonzero_pauli_permutation}
    For $k$ two-qubit Paulis $Q_1,\dots,Q_k$ and permutation operator $W_\pi$ for permutation $\pi\in S_k$, we have $\tr(W_\pi Q_1 \otimes \dots \otimes Q_k) \neq 0$ if and only if the product of Paulis in each set in the cyclic partition of $\pi$ on $\{Q_1,\dots,Q_k\}$ is equal to $I\otimes I$.
    Moreover, let the number of cyclic partition of $\pi$ be $c_\pi$.
    Then, if the product of Paulis in each partition of $\pi$ is identity then $\tr(W_\pi Q_1 \otimes \dots \otimes Q_k) =  \tr(I\otimes I)^{c_\pi} = 4^{c_\pi}$.
\end{lemma}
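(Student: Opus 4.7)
The plan is to decompose the trace of a permutation operator against a tensor product of Paulis into a product of per-cycle traces, and then invoke the orthogonality of two-qubit Paulis under the trace inner product. The proof will have three conceptual steps.

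First, I would expand $\tr(W_\pi Q_1 \otimes \cdots \otimes Q_k)$ in the computational basis of $(\C^4)^{\otimes k}$. With the convention $W_\pi |i_1,\dots,i_k\> = |i_{\pi^{-1}(1)},\dots,i_{\pi^{-1}(k)}\>$, a direct index computation gives
\begin{equation*}
\tr(W_\pi Q_1 \otimes \cdots \otimes Q_k) = \sum_{i_1,\dots,i_k} \prod_{j=1}^k \<i_{\pi(j)}|Q_j|i_j\> \;.
\end{equation*}
The key observation is that this sum factors over the cycles of $\pi$: along a cycle $c = (j_1\,j_2\,\dots\,j_m)$, the indices $i_{j_1},\dots,i_{j_m}$ couple only to each other and the matrix elements chain into a single trace, yielding
\begin{equation*}
\tr(W_\pi Q_1 \otimes \cdots \otimes Q_k) = \prod_{c \text{ cycle of } \pi} \tr\!\left( \prod_{i \in c} Q_i \right) \;,
\end{equation*}
where each inner product is taken in the cyclic order of the cycle (cyclicity of the trace makes this unambiguous).

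Second, I would apply Pauli orthogonality: for any two-qubit Pauli $P$ we have $\tr(P) = 4$ if $P = I \otimes I$ and $\tr(P) = 0$ otherwise. Since the product of two-qubit Paulis is again a two-qubit Pauli (up to a global phase that the lemma's formulation rules out by demanding exact equality with $I \otimes I$), each cycle factor vanishes unless $\prod_{i \in c} Q_i = I \otimes I$, and in that case it contributes exactly $4$.

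Combining the two steps yields the lemma: the full trace is nonzero if and only if the Pauli product along every cycle of $\pi$ equals $I \otimes I$, and in that case the value equals $\prod_{c} 4 = 4^{c_\pi}$. The main obstacle is the combinatorial decoupling in Step 1 — keeping careful track of which indices live on which cycle of $\pi$ and verifying that the basis sum really collapses into one trace per cycle — but this is a standard identity in Weingarten calculus and is a matter of bookkeeping rather than any new idea. A minor caveat is the handling of phases in Pauli products, which the lemma sidesteps by phrasing its condition as exact equality with $I \otimes I$ rather than proportionality.
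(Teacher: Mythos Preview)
Your proposal is correct and follows essentially the same route as the paper: factor the trace over the cycles of $\pi$ to obtain a product of traces of Pauli products, then invoke the fact that a two-qubit Pauli has nonzero trace only when it is the identity. Your computational-basis derivation of the per-cycle factorization is in fact slightly more explicit than the paper's, which simply states the decomposition $\tr(W_\pi Q_1\otimes\cdots\otimes Q_k)=\prod_j \tr(W_{\pi_j}\bar Q_j)$ and then evaluates each cycle factor; your remark about the phase caveat is also a point the paper leaves implicit.
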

\begin{proof}
    Suppose that permutation $\pi\in S_k$ has $c_\pi$ cyclic partitions so we can express it as $\pi = \pi_1\dots\pi_{c_\pi}$ where $\pi_1,\dots,\pi_{c_\pi}$ are disjoint cycles.
    So we can rewrite $\tr(W_\pi Q_1 \otimes \dots \otimes Q_k)$ as
    \begin{equation}
    \begin{aligned}
        \tr(W_\pi Q_1 \otimes \dots \otimes Q_k) &= \prod_{j=1}^{c_\pi} \tr(W_{\pi_j} \Bar{Q}_j)
    \end{aligned}
    \end{equation}
    where $\Bar{Q}_j$ is the product of two-qubit Paulis in $\{Q_1,\dots,Q_k\}$ in the $j$-th partition of $\pi$.
    For example, if $\pi_j=(1436)$ then $\Bar{Q}_j = Q_1\otimes Q_3 \otimes Q_4 \otimes Q_6$.
    Suppose that the $j$-th cycle is $\pi_j=(a_1,\dots,a_l)$ where $a_1,\dots,a_l$ are distinct numbers in $1,\dots,k$ indicating the indices of the two-qubit Paulis in the $j$-th partition.
    Then we have
    \begin{equation}
    \begin{aligned}
        \tr(W_{\pi_j} \Bar{Q}_j) = \tr\Big( Q_{a_1} Q_{\pi_j(a_1)} Q_{\pi_j^2(a_1)} \dots Q_{\pi_j^{l-1}(a_1)} \Big)
    \end{aligned}
    \end{equation}
    which is equal to 0 if $Q_{a_1} Q_{\pi_j(a_1)} Q_{\pi_j^2(a_1)} \dots Q_{\pi_j^{l-1}(a_1)}$ is not equal to identity $I\otimes I$ and otherwise $\tr(W_{\pi_j} \Bar{Q}_j) = \tr(I\otimes I) = 4$.
    The claim in the lemma follows by following the same argument for each of the $c_\pi$ partitions.
\end{proof}

Values of the Weingarten function only depends on the cyclic partition of the permutation $\tau\pi$, particularly thesize of each partition.
As we are interested only in permutations over four elements, there are only five possible cyclic partitions we denote as $[4]$, $[3,1]$, $[2,2]$, $[2,1,1]$, and $[1,1,1,1]$, where the numbers inside the square brackets are the size of each partition.
Now for simplicity, we do a slight abuse of notation in writing the Weingarten function to only denote the cyclic partition of the permutation in its argument and list its values
\begin{equation}
\begin{aligned}
    \mathrm{Wg}([4]) &= \frac{-20}{\Delta} \\
    \mathrm{Wg}([3,1]) &= \frac{29}{\Delta} \\
    \mathrm{Wg}([2,2]) &= \frac{32}{\Delta} \\
    \mathrm{Wg}([2,1,1]) &= \frac{-48}{\Delta} \\
    \mathrm{Wg}([1,1,1,1]) &= \frac{134}{\Delta}
\end{aligned}
\end{equation}
where $\Delta = 4(4^2-1)(4^2-4)(4^2-9) = 20160$.
For a more general formula of the Weingarten function on $\mathcal{S}_4$ see the appendix of~\cite{roberts2017chaos}.

For $\mathcal{S},\mathcal{S}'\subseteq \mathcal{S}_4$, let
\begin{equation}
\begin{aligned}
    &\Bar{G}(r,\Tilde{r},s,\Tilde{s} ; r',\Tilde{r}',s',\Tilde{s}') \\
    &= \sum_{\tau\in \mathcal{S}, \pi\in \mathcal{S}'} \mathrm{Wg}(\tau\pi) \tr(W_{\tau^{-1}} r \otimes \Tilde{r} \otimes s \otimes \Tilde{s}) \, \tr(W_\pi r' \otimes \Tilde{r}' \otimes s' \otimes \Tilde{s}') \\
    &= \frac{1}{2^8} \sum_{\tau\in \mathcal{S}, \pi\in \mathcal{S}'} \mathrm{Wg}(\tau\pi) \tr(W_{\tau^{-1}} R \otimes \Tilde{R} \otimes S \otimes \Tilde{S}) \, \tr(W_\pi R' \otimes \Tilde{R}' \otimes S' \otimes \Tilde{S}') \;,
\end{aligned}
\end{equation}
where $\mathcal{S}$ and $\mathcal{S}'$ are subsets of $\mathcal{S}_4$ defined as
\begin{equation}
\begin{aligned}
    \mathcal{S} &= \Big\{ \pi\in \mathcal{S}_4 : \tr(W_\pi R \otimes \Tilde{R} \otimes S \otimes \Tilde{S}) \neq 0 \Big\} \\
    \mathcal{S}' &= \Big\{ \pi\in \mathcal{S}_4 : \tr(W_\pi R' \otimes \Tilde{R}' \otimes S' \otimes \Tilde{S}') \neq 0 \Big\} \;.
\end{aligned}
\end{equation}
Therefore we get
\begin{equation}
    G(r,\Tilde{r},s,\Tilde{s} ; r',\Tilde{r}',s',\Tilde{s}') = \Bar{G}(r,\Tilde{r},s,\Tilde{s} ; r',\Tilde{r}',s',\Tilde{s}')
\end{equation}
hence $\Bar{G}$ consist of the non-zero terms in $G$ which we can get by using Lemma~\ref{lem:nonzero_pauli_permutation}, namely the only terms that we need to compute to get the value of $G$ for two qubit Paulis in its argument that we are interested in.

Now we consider the values of $G$ for specific Paulis $r,\Tilde{r},r',\Tilde{r}'$:
\begin{enumerate}
    \item $r=\Tilde{r}=r'=\Tilde{r}'=ZI/2$,
    \item $r=\Tilde{r}=r'=\Tilde{r}'=II/2$, and
    \item $r=ZZ/2$ and $\Tilde{r}=r'=\Tilde{r}'=ZI/2$,
\end{enumerate}
which relevance will be clear in our proof that sXQUATH does not hold for single-block mQSVT.
The different values that $G$ can take for the aforementioned two qubit Paulis are
\begin{align}
    \begin{split}
        &G(ZI/2,ZI/2,s,\Tilde{s} ; ZI/2,ZI/2,s',\Tilde{s}') \\
        &= 
        \begin{cases}
            \frac{90048}{2^8\Delta} \quad&\textup{if $s=\Tilde{s}=s'=\Tilde{s}'=ZI/2$} \\
            \frac{31680}{2^8\Delta} \quad&\textup{if $s=\Tilde{s}\notin\{ZI/2,II/2\}$ and $s'=\Tilde{s}'\notin\{ZI/2,II/2\}$} \\
            \frac{18368}{2^8\Delta} \quad&\textup{if $s=\Tilde{s}=ZI/2$ and $ s'=\Tilde{s}'\notin\{ZI/2,II/2\}$} \\
            \frac{18368}{2^8\Delta} \quad&\textup{if $s=\Tilde{s}\notin\{ZI/2,II/2\}$ and $ s'=\Tilde{s}'=ZI/2$} \\
            \frac{602048}{2^8\Delta} \quad&\textup{if $s=\Tilde{s}=s'=\Tilde{s}'=II/2$} \\
            0 \quad&\textup{otherwise}
        \end{cases} \label{eqn:haar_4_moment_ZI}
    \end{split}\\
    &G(II/2,II/2,s,\Tilde{s} ; II/2,II/2,s',\Tilde{s}') = 
    \begin{cases}
        1 \quad&\textup{if $s=\Tilde{s}=s'=\Tilde{s}'=II/2$} \\
        \frac{31680}{2^8\Delta} \quad&\textup{if $s=\Tilde{s}\neq II/2$ and $s'=\Tilde{s}'\neq II/2$} \\
        0 \quad&\textup{otherwise}
    \end{cases} \label{eqn:haar_4_moment_II} \\
    \begin{split}
        &G(ZZ/2,ZI/2,s,\Tilde{s} ; ZI/2,ZI/2,s',\Tilde{s}') \\
        &= 
        \begin{cases}
            \frac{95232}{2^8\Delta} \quad&\textup{if $\{s,\Tilde{s}\}=\{ZZ/2,ZI/2\}$ and $s'=\Tilde{s}'=ZI/2$} \\
            \frac{-121920}{2^8\Delta} \quad&\textup{if $\{s,\Tilde{s}\}=\{PX/2,PY/2\}$ and $s'=\Tilde{s}'=ZI/2$} \\
            \frac{73664}{2^8\Delta} \quad&\textup{if $\{s,\Tilde{s}\}=\{IZ/2,II/2\}$ and $s'=\Tilde{s}'=ZI/2$} \\
            \frac{-6656}{2^8\Delta} \quad&\textup{if $\{s,\Tilde{s}\}=\{ZZ/2,ZI/2\}$ and $s'=\Tilde{s}'\notin\{ZI/2,II/2\}$} \\
            \frac{12224}{2^8\Delta} \quad&\textup{if $\{s,\Tilde{s}\}=\{PX/2,PY/2\}$ and $s'=\Tilde{s}'\notin\{ZI/2,II/2\}$} \\
            \frac{960}{2^8\Delta} \quad&\textup{if $\{s,\Tilde{s}\}=\{IZ/2,II/2\}$ and $s'=\Tilde{s}'\notin\{ZI/2,II/2\}$} \\
            0 \quad&\textup{otherwise}
        \end{cases} \;, \label{eqn:haar_4_moment_ZZ_ZI}
    \end{split}
\end{align}
where $P\in\{I,X,Y,Z\}$ in eqn.~\eqref{eqn:haar_4_moment_ZZ_ZI}.
Note that each of the values above lies in the interval $[-1,1]$ since $2^8\Delta = 5160960$.

\section{sXQUATH does not hold for single-block mQSVT circuit}\label{App:sxquath_false}

In this section we give a detailed proof of Theorem 2 stated in the main manuscript.
First, consider Pauli path algorithm that gives a classical approximation $q(U,x)$ of the probability $p(U,x)$ for single-block mQSVT circuit with $n+1$ qubit unitary $U$ with depth $d_U$ outputting an $n$ bit string $x\neq 0^n$, where
\begin{gather}\label{eqn:pauli_path_classical_estimation}
    q(U,x) = \frac{1}{2^n} + F(U,\mathbf{r},x) \;.
\end{gather}
Here we have Pauli path $\mathbf{r} = (Z\otimes I^l \otimes Z \otimes I^{\otimes n-l-1} , Z\otimes I^{\otimes n} , \dots , Z\otimes I^{\otimes n})$ such that the input Paulis into the two qubit gate in the first layer of unitary $U$ that operates on the first qubit is $Z\otimes Z$.
For example, if in the first layer the two qubit gate that operates on the first qubit couples it with the 5th qubit then $Z\otimes I^l \otimes Z \otimes I^{\otimes n-l-1} = Z\otimes I^3 \otimes Z \otimes I^{\otimes n-4}$.

As the architecture that we consider puts every qubit register through a two qubit gate in each layer $j\in[d_U]$, we assume that $n+1$ is even. 
This is the same architecture used in~\cite{aharonov2023polynomial}.
For notational simplicity in the decomposition of $U$ into two qubit gates $\{V_{j,i} : j\in[d_U], i\in[(n+1)/2]\}$ as in the previous section, we denote the two qubit gate operating on the first qubit register in each layer $j$ as $V_{j,1}$.
So for Pauli path $\mathbf{r}$, these are the only two qubit gates in $U$ (therefore in the entire mQSVT circuit) that has non-identity input and output Paulis in the Pauli path approximation.

Now if by using $q(U,x)$ in sXQUATH (Definition~\ref{def:sxquath}) as an approximation to $p(U,x)$ and by expanding the squares and simplifying the terms we get
\begin{align}
    \textup{sXQ} &= 2^{2n} \bigg( \E_{U,X}\Big[\Big(p(U,X)-\frac{1}{2^n}\Big)^2\Big] - \E_{U,X}\Big[\Big(p(U,X)-q(U,X)\Big)^2\Big] \bigg) \\
    &= 2^{2n} \E_{U,X}\bigg[ \frac{1}{2^{2n}} - \frac{2}{2^n}p(U,X) - q(U,X)^2 + 2p(U,X)q(U,X) \bigg] \\
    &= \frac{2^{2n}}{2^n-1} \sum_{x\neq 0^n} \E_U\bigg[ \frac{1}{2^{2n}} - \frac{2}{2^n}p(U,x) - q(U,x)^2 + 2p(U,x)q(U,x) \bigg] \\
    &= \frac{2^{2n}}{2^n-1} \sum_{x\neq 0^n} \E_U\bigg[ \frac{1}{2^{2n}} - \frac{2}{2^n}p(U,x) - q(U,x)^2 + \frac{2}{2^n}p(U,x) + 2p(U,x)F(U,\mathbf{r},x) \bigg] \\
    &= \frac{2^{2n}}{2^n-1} \sum_{x\neq 0^n} \E_U\bigg[ \frac{1}{2^{2n}} - \Big(\frac{1}{2^{2n}}+\frac{2}{2^n}F(U,\mathbf{r},x)+F(U,\mathbf{r},x)^2\Big) + 2p(U,x)F(U,\mathbf{r},x) \bigg] \\
    &= \frac{2^{2n}}{2^n-1} \sum_{x\neq 0^n} \bigg( -\frac{2}{2^n}\E_U\Big[F(U,\mathbf{r},x)\Big] - \E_U\Big[F(U,\mathbf{r},x)^2\Big] + 2\E_U\Big[p(U,x)F(U,\mathbf{r},x) \Big] \bigg) \;.
\end{align}
Furthermore if we express the probability $p(U,x)$ in terms of Pauli paths, namely $p(U,x) = \sum_{\mathbf{s}} F(U,\mathbf{s},x)$ we get
\begin{align}
    \E_U\Big[p(U,x)F(U,\mathbf{r},x) \Big] &= \sum_{\mathbf{s}} \E_U\Big[ F(U,\mathbf{s},x)\, F(U,\mathbf{r},x) \Big]
\end{align}
where the first equality is obtained by expanding $p(U,x)$ in terms of Pauli paths (eqn.\eqref{eqn:outut_proba_pauli_path_expansion}).
Hence the sXQUATH expression above becomes
\begin{equation}\label{eqn:sxquath_simplified}
\begin{aligned}
    \textup{sXQ} &= \frac{2^{2n}}{2^n-1} \sum_{x\neq 0^n} \bigg( -\frac{2}{2^n}\E_U\Big[F(U,\mathbf{r},x)\Big] - \E_U\Big[F(U,\mathbf{r},x)^2\Big] + 2\sum_{\mathbf{s}\neq\mathbf{r}} \E_U\Big[ F(U,\mathbf{s},x)\, F(U,\mathbf{r},x) \Big] \bigg) \;.
\end{aligned}
\end{equation}
Now we treat each terms in the sum over $x$ above separately.

For the first term in the outer sum of eqn.~\eqref{eqn:sxquath_simplified}, we have $\E_U[F(U,\mathbf{r},x)]=0$ because by expanding it in terms of the Pauli paths as in eqn.~\eqref{eqn:expetation_fourier_mqsvt} we get the term
\begin{equation}
\begin{aligned}
    \frac{1}{2^4} \E_V\Big[\tr\Big(V^{\otimes 2} ZI \otimes ZZ (V^\dag)^{\otimes 2} ZI\otimes ZI \Big)\Big] = \frac{1}{2^4} G(ZI , ZZ ; ZI , ZI ) = 0
\end{aligned}
\end{equation}
by eqn.~\eqref{eqn:order_2_haar_moment}.
For the second term $\E_U[F(U,\mathbf{r},x)^2]$, we use eqn.~\eqref{eqn:expetation_two paths_fourier_mqsvt} with $\mathbf{r}=\mathbf{s}$ and evaluate the product of Haar-random expectation values using \eqref{eqn:haar_4_moment_ZZ_ZI}, \eqref{eqn:haar_4_moment_ZI}, and \eqref{eqn:haar_4_moment_II} to get
\begin{equation}\label{eqn:Haar_moment_product_fourier_squared}
\begin{aligned}
    & \prod_{j=1}^{d_U} \prod_{i=1}^{(n+1)/2} \E_{V_{j,i}}\Big[ \tr\Big( V_{j,i}^{\otimes 4} (r_{j,i} \otimes \Tilde{r}_{j,i} \otimes s_{j,i} \otimes \Tilde{s}_{j,i}) (V_{j,i}^\dag)^{\otimes 4} (r_{j+1,i} \otimes \Tilde{r}_{j+1,i} \otimes s_{j+1,i} \otimes \Tilde{s}_{j+1,i}) \Big) \Big] \\
    &= G( ZZ/2 , ZI/2 , ZZ/2 , ZI/2 ; ZI/2 , ZI/2 , ZI/2 , ZI/2 ) \\ &\quad \Big( G( ZI/2 , ZI/2 , ZI/2 , ZI/2 ; ZI/2 , ZI/2 , ZI/2 , ZI/2 ) \Big)^{d_U-1} \\ &\quad \Big( G( II/2 , II/2 , II/2 , II/2 ; II/2 , II/2 , II/2 , II/2 ) \Big)^{(d_U-1) + \frac{n-1}{2}} \\
    &= \frac{95232}{2^8\Delta} \bigg(\frac{90048}{2^8\Delta}\bigg)^{d_U-1} \\
    &= (2^8\Delta)^{-d_U} \times 95232 \times 90048^{d_U-1} \;.
\end{aligned}
\end{equation}
For shorthand let $\gamma = (2^8\Delta)^{-d_U} \times 95232 \times 90048^{d_U-1}$, so by substituting this into eqn.~\eqref{eqn:expetation_two paths_fourier_mqsvt} we get
\begin{equation}\label{eqn:fourier_coefficient_squared_expectation_pauli_sim}
\begin{aligned}
    &\E_U[F(U,\mathbf{r},x)^2] = \gamma \llangle 0x|\Tilde{r}_1 \rrangle^2 
    \llangle\Tilde{r}_{d_U+1}|R(\varphi_1)|r_{d_U+1}\rrangle^2
    \llangle r_1|0^{n+1}\rrangle^2 \\
    &\quad= \frac{\gamma}{2^{4(n+1)}} 
    \underbrace{\<0x|Z\otimes I^{\otimes n}|0x\>^2}_{=1} \,
    \underbrace{\tr\big( R(\varphi_1) Z R(\varphi_1)^\dag Z \big)^2}_{=2^2} \underbrace{\tr(I^{\otimes n})^2}_{2^{2n}} \,
    \underbrace{\<0^{n+1}|Z\otimes Z\otimes I^{\otimes n-1}|0^{n+1}\>^2}_{=1} \\
    &\quad= \frac{\gamma}{2^{2(n+1)}} 
\end{aligned}
\end{equation}
because $\<y|I|y\>=1$ for $y\in\{0,1\}$ and $\<0|Z|0\>=1$ and because $R(\varphi_1)$ and $Z$ commute.
Since this is true for all $x\neq0^n$ then by pulling in the sum over $x\neq0^n$ in eqn.~\eqref{eqn:sxquath_simplified} inside the brackets we get the term
\begin{equation}\label{eqn:sum_expectation_fourier_final_value}
    \sum_{x\neq0^n} \E_U[F(U,\mathbf{r},x)^2] =  \frac{2^n-1}{2^{2(n+1)}} \gamma \;.
\end{equation}

Now we are left with the $\sum_{\mathbf{s}\neq\mathbf{r}} \E_U[ F(U,\mathbf{s},x)\, F(U,\mathbf{r},x) ]$ term in eqn.~\eqref{eqn:sxquath_simplified}, where we need to evaluate this for every Pauli path $\mathbf{s}$ such that $\mathbf{s}\neq\mathbf{r}$.
Again we expand $\E_U[ F(U,\mathbf{s},x)\, F(U,\mathbf{r},x) ]$ using eqn.~\eqref{eqn:expetation_two paths_fourier_mqsvt} which contains product of Haar-random expectations
\begin{equation}\label{eqn:product_haar_moments_nonequal_s}
\begin{aligned}
    \gamma_\mathbf{s} &= \prod_{j=1}^{d_U} \prod_{i=1}^{(n+1)/2} \E_{V_{j,i}}\Big[ \tr\Big( V_{j,i}^{\otimes 4} (r_{j,i} \otimes \Tilde{r}_{j,i} \otimes s_{j,i} \otimes \Tilde{s}_{j,i}) (V_{j,i}^\dag)^{\otimes 4} (r_{j+1,i} \otimes \Tilde{r}_{j+1,i} \otimes s_{j+1,i} \otimes \Tilde{s}_{j+1,i}) \Big) \Big] \\
    &= G( ZZ/2 , ZI/2 , s_{1,1} , \Tilde{s}_{1,1} ; ZI/2 , ZI/2 , s_{2,1} , \Tilde{s}_{2,1} ) \\ &\quad \bigg( \prod_{j=2}^{d_U} G( ZI/2 , ZI/2 , s_{j,1} , \Tilde{s}_{j,1} ; ZI/2 , ZI/2 , s_{j+1,1} , \Tilde{s}_{j+1,1} ) \bigg) \\ &\quad \bigg( \prod_{j=1}^{d_U} \prod_{i=2}^{(n+1)/2} G( II/2 , II/2 , s_{j,i} , \Tilde{s}_{j,i} ; II/2 , II/2 , s_{j+1,i} , \Tilde{s}_{j+1,i} ) \bigg) \;.
\end{aligned}
\end{equation}
Note that $\gamma_\mathbf{s}$ is non-zero if and only if Pauli path $\mathbf{s}$ is such that all instance of $G$ in it is non-zero, which can be identified using \eqref{eqn:haar_4_moment_ZZ_ZI}, \eqref{eqn:haar_4_moment_ZI}, and \eqref{eqn:haar_4_moment_II}.
Moreover, the values of $\llangle s_{1,1} | 00 \rrangle$ and $\llangle 0x_1 | \Tilde{s}_{1,1} \rrangle$ also determines which $\mathbf{s}$ gives a non-zero $\E_U[ F(U,\mathbf{s},x)\, F(U,\mathbf{r},x) ]$.
Their values for all $\mathbf{s}$ such that $\gamma_\mathbf{s}\neq0$ (see eqn.\eqref{eqn:haar_4_moment_ZZ_ZI}) are
\begin{equation}
\begin{aligned}
    \llangle s_{1,1} | 00 \rrangle =
    \begin{cases}
        0 \quad&\textup{if $s_{1,1}\in\{PX/2,PY/2\}$}\\
        \frac{1}{2} \quad&\textup{if $s_{1,1}\in\{ZI/2,ZZ/2,IZ/2,II/2\}$}
    \end{cases} \\
    \llangle 0x_1 | \Tilde{s}_{1,1} \rrangle =
    \begin{cases}
        0 \quad&\textup{if $\Tilde{s}_{1,1}\in\{PX/2,PY/2\}$}\\
        \frac{1}{2} \quad&\textup{if $\Tilde{s}_{1,1}\in\{ZI/2,II/2\}$}\\
        \frac{1}{2} \quad&\textup{if $\Tilde{s}_{1,1}\in\{ZZ/2,IZ/2\}$ and $x_1=0$}\\
        -\frac{1}{2} \quad&\textup{if $\Tilde{s}_{1,1}\in\{ZZ/2,IZ/2\}$ and $x_1=1$}
    \end{cases} \;,
\end{aligned}
\end{equation}
which implies that we only need to consider $\mathbf{s}$ (i.e. those that gives nonzero $\gamma_\mathbf{s}$ and $\llangle s_{1,1} | 00 \rrangle$ and $\llangle 0x_1 | \Tilde{s}_{1,1} \rrangle$) such that
\begin{equation}\label{eqn:pauli_paths_nonzero_expectation}
    (s_{1,1},\Tilde{s}_{1,1})\in\{(ZI/2,ZZ/2),(ZZ/2,ZI/2),(IZ/2,II/2),(II/2,IZ/2)\} \;.
\end{equation}

Now by doing similar calculation as in~\eqref{eqn:fourier_coefficient_squared_expectation_pauli_sim},
\begin{equation}
\begin{aligned}
    &\E_U[ F(U,\mathbf{s},x)\, F(U,\mathbf{r},x) ] \\
    &= \gamma_\mathbf{s} \llangle 0x|\Tilde{r}_1 \rrangle \llangle 0x|\Tilde{s}_1 \rrangle 
    \llangle\Tilde{r}_{d_U+1}|R(\varphi_1)|r_{d_U+1}\rrangle \llangle\Tilde{s}_{d_U+1}|R(\varphi_1)|s_{d_U+1}\rrangle 
    \llangle r_1|0^{n+1}\rrangle \llangle s_1|0^{n+1}\rrangle \\
    &= \frac{\gamma_\mathbf{s}}{2^{2n}}\llangle 0x_1 | \Tilde{s}_{1,1} \rrangle \, \llangle s_{1,1} | 00 \rrangle \;.
\end{aligned}
\end{equation}
Then by pulling the sum over $x\neq0^n$ inside the brackets in eqn.~\eqref{eqn:sxquath_simplified}, we get
\begin{equation}\label{eqn:nonequal_fourier_coefficient_overlap_expectation_pauli_sim}
\begin{aligned}
    \sum_{x\neq0^n} \sum_{\mathbf{s}\neq\mathbf{r}} \E_U[ F(U,\mathbf{s},x)\, F(U,\mathbf{r},x) ] &= \frac{1}{2^{2n}} \sum_{x\neq0^n} \sum_{\mathbf{s}\neq\mathbf{r}} \gamma_\mathbf{s} \llangle 0x_1 | \Tilde{s}_{1,1} \rrangle \, \llangle s_{1,1} | 00 \rrangle \\
    &= \frac{1}{2^{2n}} \sum_{\mathbf{s}\in\mathcal{S}} \gamma_\mathbf{s} \llangle s_{1,1} | 00 \rrangle \, \Big( (2^{n-1}-1) \llangle 00 | \Tilde{s}_{1,1} \rrangle + 2^{n-1} \llangle 01 | \Tilde{s}_{1,1} \rrangle \Big)
\end{aligned}
\end{equation}
where $\mathcal{S}$ is the set of Pauli paths such that $\mathbf{s}\neq\mathbf{r}$ and $\gamma_\mathbf{s}\neq0$ and $\llangle s_{1,1} | 00 \rrangle\neq0$ and $\llangle 0x_1 | \Tilde{s}_{1,1} \rrangle\neq0$.
Now consider a fixed $s_2,\dots,s_{d_U+1},\Tilde{s}_{d_U+1},\dots,\Tilde{s}_2 \neq r_2,\dots,r_{d_U+1},\Tilde{r}_{d_U+1},\dots,\Tilde{r}_2$.
By using eqn.~\eqref{eqn:haar_4_moment_ZZ_ZI}, let $\gamma_\mathbf{s}(Z^3)$ be the product of Haar moments for a Pauli path $\mathbf{s}$ such that $(s_{1,1},\Tilde{s}_{1,1})\in\{(ZI/2,ZZ/2),(ZZ/2,ZI/2)\}$ and $\gamma_\mathbf{s}(Z^1)$ the product of Haar moments for a Pauli path $\mathbf{s}$ such that $(s_{1,1},\Tilde{s}_{1,1})\in\{(IZ/2,II/2),(II/2,IZ/2)\}$.
For such Pauli paths $s_2,\dots,s_{d_U+1},\Tilde{s}_{d_U+1},\dots,\Tilde{s}_2$ we have
\begin{equation}\label{eqn:exp_fourier_sum_nonequal_s}
\begin{aligned}
    &\sum_{s_{1,1},\Tilde{s}_{1,1}} \gamma_\mathbf{s} \llangle s_{1,1} | 00 \rrangle \, \Big( (2^{n-1}-1) \llangle 00 | \Tilde{s}_{1,1} \rrangle + 2^{n-1} \llangle 01 | \Tilde{s}_{1,1} \rrangle \Big) \\
    &= \bigg(\sum_{s_{1,1},\Tilde{s}_{1,1}} \frac{\gamma_\mathbf{s}(2^{n-1}-1)}{4}\bigg) + \bigg( \frac{\gamma_{\mathbf{s}(Z^3)}}{2} \Big(\frac{2^{n-1}}{2}-\frac{2^{n-1}}{2}\Big) + \frac{\gamma_{\mathbf{s}(Z^1)}}{2} \Big( \frac{2^{n-1}}{2}-\frac{2^{n-1}}{2} \Big) \bigg) \\
    &= \sum_{s_{1,1},\Tilde{s}_{1,1}} \frac{\gamma_\mathbf{s}(2^{n-1}-1)}{4} \;,
\end{aligned}
\end{equation}
where the sum is over the pair $s_{1,1},\Tilde{s}_{1,1}$ in~\eqref{eqn:pauli_paths_nonzero_expectation}.
Note that the first equality is due to $\llangle s_{1,1} | 00 \rrangle = \llangle 00 | \Tilde{s}_{1,1} \rrangle = \frac{1}{2}$.
For $s_2,\dots,s_{d_U+1},\Tilde{s}_{d_U+1},\dots,\Tilde{s}_2 = r_2,\dots,r_{d_U+1},\Tilde{r}_{d_U+1},\dots,\Tilde{r}_2$, similarly we get
\begin{equation}\label{eqn:exp_fourier_sum_equal_s}
\begin{aligned}
    &\sum_{s_{1,1},\Tilde{s}_{1,1}} \gamma_\mathbf{s} \llangle s_{1,1} | 00 \rrangle \, \Big( (2^{n-1}-1) \llangle 00 | \Tilde{s}_{1,1} \rrangle + 2^{n-1} \llangle 01 | \Tilde{s}_{1,1} \rrangle \Big) \\
    &= \bigg(\sum_{s_{1,1},\Tilde{s}_{1,1}} \frac{\gamma_\mathbf{s}(2^{n-1}-1)}{4}\bigg) + \bigg( \frac{\gamma_{\mathbf{s}(Z^3)}}{2} \Big(-\frac{2^{n-1}}{2}\Big) + \frac{\gamma_{\mathbf{s}(Z^1)}}{2} \Big( \frac{2^{n-1}}{2}-\frac{2^{n-1}}{2} \Big) \bigg) \\
    &= \frac{\gamma_{\mathbf{s}(Z^3)} (2^{n-1}-1)}{2} + \frac{\gamma_{\mathbf{s}(Z^1)} (2^{n-1}-1)}{2} - \frac{\gamma_{\mathbf{s}(Z^3)} 2^{n-1}}{4} \\
    &= \frac{\gamma_{\mathbf{s}(Z^3)} (2^{n-1}-2)}{4} + \frac{\gamma_{\mathbf{s}(Z^1)} (2^{n-1}-1)}{2} \;,
\end{aligned}
\end{equation}
where the sum in the first and second line are over the pair $s_{1,1},\Tilde{s}_{1,1}$ in~\eqref{eqn:pauli_paths_nonzero_expectation} except for $(s_{1,1},\Tilde{s}_{1,1}) = (ZZ/2,ZI/2)$ because $(r_{1,1},\Tilde{r}_{1,1}) = (ZZ/2,ZI/2)$ and we require that $\mathbf{s}\neq\mathbf{r}$.
Hence by substituting eqn.\eqref{eqn:exp_fourier_sum_nonequal_s} and eqn.~\eqref{eqn:exp_fourier_sum_equal_s} back into eqn.~\eqref{eqn:nonequal_fourier_coefficient_overlap_expectation_pauli_sim} we obtain
\begin{equation}\label{eqn:simplified_nonequal_fourier_coefficient_overlap_expectation_pauli_sim}
\begin{aligned}
    &\sum_{x\neq0^n}\sum_{\mathbf{s}\neq\mathbf{r}} \E_U[ F(U,\mathbf{s},x)\, F(U,\mathbf{r},x) ] \\
    &= \frac{1}{2^{2n}} \bigg( \sum_{s_{1,1},\Tilde{s}_{1,1}} \sum_{\mathbf{s}_{2:}\in\mathcal{S}_{2:}\backslash\{\mathbf{r}_{2:}\}} \frac{\gamma_\mathbf{s}(2^{n-1}-1)}{4} \bigg) + \frac{1}{2^{2n}} \bigg( \frac{\gamma_{\mathbf{s}(Z^3)} (2^{n-1}-2)}{4} + \frac{\gamma_{\mathbf{s}(Z^1)} (2^{n-1}-1)}{2} \bigg)
\end{aligned}
\end{equation}
where we denote $\mathbf{s}_{2:} = s_2,\dots,s_{d_U+1},\Tilde{s}_{d_U+1},\dots,\Tilde{s}_2$ and $\mathcal{S}_{2:}\backslash\{\mathbf{r}_{2:}\}$ is the set of Pauli paths $\mathbf{s}$ such that $\mathbf{s}_{2:} \neq r_2,\dots,r_{d_U+1},\Tilde{r}_{d_U+1},\dots,\Tilde{r}_2$.
Also, here the product of Haar moments $\gamma_\mathbf{s}(Z^3)$ and $\gamma_\mathbf{s}(Z^1)$ are such that $\mathbf{s}_{2:} = r_2,\dots,r_{d_U+1},\Tilde{r}_{d_U+1},\dots,\Tilde{r}_2$.

We now analyze the value of the $\gamma_\mathbf{s}$'s.
For Pauli path $\mathbf{s}$ satisfying $\mathbf{s}_{2:} = r_2,\dots,r_{d_U+1},\Tilde{r}_{d_U+1},\dots,\Tilde{r}_2$ a calculation similar to eqn.~\eqref{eqn:Haar_moment_product_fourier_squared} by using~\eqref{eqn:haar_4_moment_ZZ_ZI} gives us
\begin{equation}
\begin{aligned}
    \gamma_\mathbf{s}(Z^3) &= \bigg(\frac{90048}{2^8\Delta}\bigg)^{d_U-1} G(ZZ/2,ZI/2,IZ/2,ZZ/2 ; ZI/2,ZI/2,ZI/2,ZI/2) \\
    &= (2^8\Delta)^{-d_U} \times 95232 \times 90048^{d_U-1} \\
    &\textup{and} \\
    \gamma_\mathbf{s}(Z^1) &= \bigg(\frac{90048}{2^8\Delta}\bigg)^{d_U-1} G(ZZ/2,ZI/2,IZ/2,II/2 ; ZI/2,ZI/2,ZI/2,ZI/2) \\
    &= \bigg(\frac{90048}{2^8\Delta}\bigg)^{d_U-1} G(ZZ/2,ZI/2,II/2,IZ/2 ; ZI/2,ZI/2,ZI/2,ZI/2) \\
    &= (2^8\Delta)^{-d_U} \times 73664 \times 90048^{d_U-1} \;.
\end{aligned}
\end{equation}
Hence, the second term in~\eqref{eqn:simplified_nonequal_fourier_coefficient_overlap_expectation_pauli_sim} is
\begin{equation}\label{eqn:simplified_nonequal_fourier_coefficient_overlap_expectation_pauli_sim_second}
\begin{aligned}
    \frac{1}{2^{2n}} \bigg( \frac{\gamma_{\mathbf{s}(Z^3)} (2^{n-1}-2) + \gamma_{\mathbf{s}(Z^1)} (2^n-2)}{4} \bigg) &= \frac{1}{2^{2n}}\bigg( \frac{(2^{n-1}\times242560 - 2\times168896) 90048^{d_U-1}}{4(2^8\Delta)^{d_U}} \bigg) \;,
\end{aligned}
\end{equation}
which is positive for all $n\geq2$.

For Pauli paths $\mathbf{s}$ such that $\mathbf{s}_{2:} \neq r_2,\dots,r_{d_U+1},\Tilde{r}_{d_U+1},\dots,\Tilde{r}_2$, let us denote
\begin{equation}
\begin{gathered}
    g_1(s,\Tilde{s},s',\Tilde{s}') = G(ZZ/2,ZI/2,s,\Tilde{s} ; ZI/2,ZI/2,s',\Tilde{s}') \\
    g_2(s,\Tilde{s},s',\Tilde{s}') = G(ZI/2,ZI/2,s,\Tilde{s} ; ZI/2,ZI/2,s',\Tilde{s}') \\
    g_3(s,\Tilde{s},s',\Tilde{s}') = G(II/2,II/2,s,\Tilde{s} ; II/2,II/2,s',\Tilde{s}') \;,
\end{gathered}
\end{equation}
so that we can write $\gamma_\mathbf{s}$ as
\begin{equation}\label{eqn:shorthand_haar_moment_product_nonequal_s}
\begin{aligned}
    \gamma_\mathbf{s} = g_1(s_{1,1},\Tilde{s}_{1,1},s_{2,1},\Tilde{s}_{2,1}) 
    \bigg( \prod_{j=2}^{d_U} g_2(s_{j,1},\Tilde{s}_{j,1},s_{j+1,1},\Tilde{s}_{j+1,1}) \bigg)
    \bigg( \prod_{j=1}^{d_U} \prod_{i=2}^{(n+1)/2} g_3(s_{j,i},\Tilde{s}_{j,i},s_{j+1,i},\Tilde{s}_{j+1,i}) \bigg) \;.
\end{aligned}
\end{equation}
For such Pauli path, in the product of Haar moments $\gamma_\mathbf{s}$ (see eqn.~\eqref{eqn:product_haar_moments_nonequal_s}) there are:
\begin{enumerate}
    \item 2 possible $s_{1,1},\Tilde{s}_{1,1},s_{2,1},\Tilde{s}_{2,1}$ such that $g_1(s_{1,1},\Tilde{s}_{1,1},s_{2,1},\Tilde{s}_{2,1}) = 95232 (2^8\Delta)^{-1}$

    \item 2 possible $s_{1,1},\Tilde{s}_{1,1},s_{2,1},\Tilde{s}_{2,1}$ such that $g_1(s_{1,1},\Tilde{s}_{1,1},s_{2,1},\Tilde{s}_{2,1}) = 73664 (2^8\Delta)^{-1}$

    \item 28 possible $s_{1,1},\Tilde{s}_{1,1},s_{2,1},\Tilde{s}_{2,1}$ such that $g_1(s_{1,1},\Tilde{s}_{1,1},s_{2,1},\Tilde{s}_{2,1}) = -6656 (2^8\Delta)^{-1}$

    \item 28 possible $s_{1,1},\Tilde{s}_{1,1},s_{2,1},\Tilde{s}_{2,1}$ such that $g_1(s_{1,1},\Tilde{s}_{1,1},s_{2,1},\Tilde{s}_{2,1}) = 960 (2^8\Delta)^{-1}$.
\end{enumerate}
For each $j\in\{2,\dots,d_U\}$, there are:
\begin{enumerate}
    \item 1 possible $s_{j,1},\Tilde{s}_{j,1},s_{j+1,1},\Tilde{s}_{j+1,1}$ such that $g_2(s_{j,1},\Tilde{s}_{j,1},s_{j+1,1},\Tilde{s}_{j+1,1}) = 90048 (2^8\Delta)^{-1}$

    \item $14\times14=196$ possible $s_{j,1},\Tilde{s}_{j,1},s_{j+1,1},\Tilde{s}_{j+1,1}$ such that $g_2(s_{j,1},\Tilde{s}_{j,1},s_{j+1,1},\Tilde{s}_{j+1,1}) = 31680 (2^8\Delta)^{-1}$

    \item 28 possible $s_{j,1},\Tilde{s}_{j,1},s_{j+1,1},\Tilde{s}_{j+1,1}$ such that $g_2(s_{j,1},\Tilde{s}_{j,1},s_{j+1,1},\Tilde{s}_{j+1,1}) = 18368 (2^8\Delta)^{-1}$

    \item 1 possible $s_{j,1},\Tilde{s}_{j,1},s_{j+1,1},\Tilde{s}_{j+1,1}$ such that $g_2(s_{j,1},\Tilde{s}_{j,1},s_{j+1,1},\Tilde{s}_{j+1,1}) = 602048 (2^8\Delta)^{-1}$.
\end{enumerate}
For each $j\in\{1,\dots,d_U\}$ and $i\in\{2,\dots,(n+1)/2\}$ there are:
\begin{enumerate}
    \item 1 possible $s_{j,i},\Tilde{s}_{j,i},s_{j+1,i},\Tilde{s}_{j+1,i}$ such that $g_3(s_{j,i},\Tilde{s}_{j,i},s_{j+1,i},\Tilde{s}_{j+1,i}) = 1$

    \item $15\times15=225$ possible $s_{j,i},\Tilde{s}_{j,i},s_{j+1,i},\Tilde{s}_{j+1,i}$ such that $g_3(s_{j,i},\Tilde{s}_{j,i},s_{j+1,i},\Tilde{s}_{j+1,i}) = 31680 (2^8\Delta)^{-1}$.
\end{enumerate}

Note that among all possible values of $g_1,g_2,g_3$ there is only a single negative value, namely when $g_1$ takes the argument of $\{s,\Tilde{s}\}=\{ZZ/2,ZI/2\}$ and $s'=\Tilde{s}'\notin\{ZI/2,II/2\}$ in which it gives $-6656(2^8\Delta)^{-1}$.
Hence two products of $g_2$'s and $g_3$'s in~\eqref{eqn:shorthand_haar_moment_product_nonequal_s} are positive for any Pauli path $\mathbf{s}$.
Now if we fix all Paulis in $\mathbf{s}$ except for $s_{1,1},\Tilde{s}_{1,1},s_{2,1},\Tilde{s}_{2,1}$ then let
\begin{equation}
\begin{aligned}
    \xi_\mathbf{s} &= \sum_{s_{1,1},\Tilde{s}_{1,1}} \sum_{s_{2,1},\Tilde{s}_{2,1}} \frac{\gamma_\mathbf{s}(2^{n-1}-1)}{4} \\
    &= \frac{2^{n-1}-1}{4} \bigg( \prod_{j=2}^{d_U} g_2(s_{j,1},\Tilde{s}_{j,1},s_{j+1,1},\Tilde{s}_{j+1,1}) \bigg) \bigg( \prod_{j=1}^{d_U} \prod_{i=2}^{(n+1)/2} g_3(s_{j,i},\Tilde{s}_{j,i},s_{j+1,i},\Tilde{s}_{j+1,i}) \bigg) \\
    &\quad \times \Big( 2\times95232 + 2\times73664 + 28\times(-6656) + 28\times960 \Big) (2^8\Delta)^{-1} \\
    &= \frac{2^{n-1}-1}{4} \frac{178304}{2^8\Delta} \bigg( \prod_{j=2}^{d_U} g_2(s_{j,1},\Tilde{s}_{j,1},s_{j+1,1},\Tilde{s}_{j+1,1}) \bigg) \bigg( \prod_{j=1}^{d_U} \prod_{i=2}^{(n+1)/2} g_3(s_{j,i},\Tilde{s}_{j,i},s_{j+1,i},\Tilde{s}_{j+1,i}) \bigg) \;,
\end{aligned}
\end{equation}
which is positive.
Therefore for any choice of Paulis that determines the value of the $g_2$'s and $g_3$'s, namely $\{(s_{j,1},\Tilde{s}_{j,1},s_{j+1,1},\Tilde{s}_{j+1,1})\}_{j=2}^{d_U}$ and $\{(s_{j,i},\Tilde{s}_{j,i},s_{j+1,i},\Tilde{s}_{j+1,i})\}_{j=1,i=2}^{d_U,(n+1)/2}$, the first term of eqn.~\eqref{eqn:simplified_nonequal_fourier_coefficient_overlap_expectation_pauli_sim} are positive.
Hence we can obtain a lower bound
\begin{equation}\label{eqn:simplified_nonequal_fourier_coefficient_overlap_expectation_pauli_sim_first}
\begin{aligned}
    &\frac{1}{2^{2n}} \bigg( \sum_{\mathbf{s}\backslash\{s_{1,1},\Tilde{s}_{1,1},s_{2,1},\Tilde{s}_{2,1}\}} \sum_{s_{1,1},\Tilde{s}_{1,1}} \sum_{s_{2,1},\Tilde{s}_{2,1}} \frac{\gamma_\mathbf{s}(2^{n-1}-1)}{4} \bigg) \\
    &= \frac{1}{2^{2n}} \bigg( \sum_{\mathbf{s}\backslash\{s_{1,1},\Tilde{s}_{1,1},s_{2,1},\Tilde{s}_{2,1}\}} \xi_\mathbf{s} \bigg) \\
    &\geq \frac{1}{2^{2n}} \frac{2^{n-1}-1}{4} \frac{178304}{2^8\Delta} \bigg(\frac{602048}{2^8\Delta}\bigg)^{d_U-1}
\end{aligned}
\end{equation}
by simply picking a single term in the sum on the second line where Pauli path $\mathbf{s}$ is such that $\{(s_{j,1},\Tilde{s}_{j,1},s_{j+1,1},\Tilde{s}_{j+1,1})\}_{j=2}^{d_U}$ and $\{(s_{j,i},\Tilde{s}_{j,i},s_{j+1,i},\Tilde{s}_{j+1,i})\}_{j=1,i=2}^{d_U,(n+1)/2}$ are all identities $II/2$.
Note that this choice of $\mathbf{s}$ indeed satisfies $\mathbf{s}\neq\mathbf{r}$.

Finally by substituting eqn.~\eqref{eqn:simplified_nonequal_fourier_coefficient_overlap_expectation_pauli_sim_second} and eqn.~\eqref{eqn:simplified_nonequal_fourier_coefficient_overlap_expectation_pauli_sim_first} back into~\eqref{eqn:simplified_nonequal_fourier_coefficient_overlap_expectation_pauli_sim} we get a lower bound
\begin{equation}\label{eqn:sum_fourier_coefficient_pauli_sim_lower_bound}
\begin{aligned}
    &\sum_{x\neq0^n}\sum_{\mathbf{s}\neq\mathbf{r}} \E_U[ F(U,\mathbf{s},x)\, F(U,\mathbf{r},x) ] \\
    &\geq \frac{(2^{n-1}-1)\times178304\times602048^{d_U-1} + (2^{n-1}\times242560 - 2\times168896) 90048^{d_U-1}}{2^{2n+2}(2^8\Delta)^{d_U}}
\end{aligned}
\end{equation}
Putting eqn.~\eqref{eqn:sum_expectation_fourier_final_value} and eqn.~\eqref{eqn:sum_fourier_coefficient_pauli_sim_lower_bound} into sXQUATH value~\eqref{eqn:sxquath_simplified} gives us the final lower bound
\begin{equation}
\begin{aligned}
    \textup{sXQ} &= \frac{2^{2n}}{2^n-1} \sum_{x\neq 0^n} \bigg( -\frac{2}{2^n}\E_U\Big[F(U,\mathbf{r},x)\Big] - \E_U\Big[F(U,\mathbf{r},x)^2\Big] + 2\sum_{\mathbf{s}\neq\mathbf{r}} \E_U\Big[ F(U,\mathbf{s},x)\, F(U,\mathbf{r},x) \Big] \bigg) \\
    &\geq \frac{1}{2^n-1} \bigg( \frac{2^n-1}{2^2} \frac{95232\times90048^{d_U-1}}{(2^8\Delta)^{d_U}} \\
    &\quad + \frac{(2^{n-1}-1)\times178304\times602048^{d_U-1} + (2^{n-1}\times242560 - 2\times168896) 90048^{d_U-1}}{2^2(2^8\Delta)^{d_U}} \bigg) \\
    &\approx \frac{95232\times90048^{d_U-1}}{2^2(2^8\Delta)^{d_U}} + \frac{178304\times602048^{d_U-1} + 242560\times90048^{d_U-1}}{2^3(2^8\Delta)^{d_U}} \\
    &= c^{d_U}
\end{aligned}
\end{equation}
for some constant $c\in(0,1)$ independent of number of qubits $n$ and sub-circuit $U$ depth $d_U$.
By setting $d_U=o(n)$, this refutes sXQUATH for single-block mQSVT circuit.

\section{Spoofing cross-entropy benchmarking using Pauli path simulation}\label{App:spoofing_sXES}

Here we give a proof of Theorem 2 stated in the main manuscript by using Pauli path algorithm to spoof sXES benchmarking defined in eqn.~\eqref{eqn:sxes}, on average.
This is done by using the Pauli path algorithm used in the proof of Theorem 1 (explained in detail in Supplementary Material~\ref{App:sxquath_false}).
The average sXES score over random unitary $U$ is given by
\begin{gather}
    \mathrm{sXES} = \E_U[\mathrm{sXES}(U)] = \E_U\Big[\sum_{x\neq0^n} p(U,x)\, q(U,x)\Big] \;.
\end{gather}
For convenience we restate the approximation $q(U,x)$ in Supplementary Material~\ref{App:sxquath_false}
\begin{gather}
    q(U,x) = \frac{1}{2^n} + F(U,\mathbf{r},x) \;,
\end{gather}
for Pauli path $\mathbf{r} = (Z\otimes I^l \otimes Z \otimes I^{\otimes n-l-1} , Z\otimes I^{\otimes n} , \dots , Z\otimes I^{\otimes n})$.
Details of this Pauli path algorithm, i.e. which two-qubit unitary the input and output Paulis correspond to, are in the paragraphs surrounding eqn.\eqref{eqn:pauli_path_classical_estimation}.

By using this Pauli path simulation, we can write the sXES score as
\begin{equation}
\begin{aligned}
    \mathrm{sXES} &= \sum_{x\neq0^n} \E_U\Big[p(U,x)\, \Big(\frac{1}{2^n} + F(U,\mathbf{r},x)\Big) \Big] \\
    &= \sum_{x\neq0^n} \E_U\Big[ \frac{p(U,x)}{2^n}\Big] + \E_U\Big[p(U,x) \, F(U,\mathbf{r},x) \Big] \\
    &= \sum_{x\neq0^n} \E_U\Big[ \frac{p(U,x)}{2^n}\Big] + \sum_{\mathbf{s}\neq\mathbf{r}} \E_U\Big[F(U,\mathbf{s},x) \, F(U,\mathbf{r},x) \Big] \;.
\end{aligned}
\end{equation}
Note that we can lower bound the second term using eqn.~\eqref{eqn:sum_fourier_coefficient_pauli_sim_lower_bound}, restated below for convenience
\begin{equation}
\begin{aligned}
    &\sum_{x\neq0^n}\sum_{\mathbf{s}\neq\mathbf{r}} \E_U[ F(U,\mathbf{s},x)\, F(U,\mathbf{r},x) ] \\
    &\geq \frac{(2^{n-1}-1)\times178304\times602048^{d_U-1} + (2^{n-1}\times242560 - 2\times168896) 90048^{d_U-1}}{2^{2n+2}(2^8\Delta)^{d_U}} \;.
\end{aligned}
\end{equation}
Hence we can lower bound sXES score of this Pauli path simulation as
\begin{equation}
\begin{aligned}
    \mathrm{sXES} &\geq \frac{1}{2^n}\E_U\Big[\sum_{x\neq0^n}p(U,x)\Big] \\
        &\quad+\frac{(2^{n-1}-1)\times178304\times602048^{d_U-1} + (2^{n-1}\times242560 - 2\times168896) 90048^{d_U-1}}{2^{2n+2}(2^8\Delta)^{d_U}} \\
    &\approx \frac{1}{2^n}\Big(\E_U\Big[\sum_{x\neq0^n}p(U,x)\Big] + c^{d_U}\Big)
\end{aligned}
\end{equation}
for some constant $c\in(0,1)$ independent of $n$ and $d_U$.

\end{document}